\newtheorem{theorem}{Theorem}[section] 
\newtheorem{lemma}[theorem]{Lemma}     
\newtheorem{corollary}[theorem]{Corollary}
\newtheorem{proposition}[theorem]{Proposition}
\newtheorem{definition}[theorem]{Definition}   
\newtheorem{example}[theorem]{Example}
\newcommand{\footremember}[2]{%
    \footnote{#2}
    \newcounter{#1}
    \setcounter{#1}{\value{footnote}}%
}
\date{}
\author{%
 A. Estevan\footremember{alley}{Dpto. Estad\'istica, Inform\'atica y Matem\'aticas, Institute INAMAT, Public University of Navarre.   Campus Arrosad\'{\i}a, 31006.  Iru\~na-Pamplona, Navarre, Spain.    }%
  \and J.J. Mi\~nana\footremember{trailer}{Departament de Ci\`encies Matem\`atiques i Inform\`atica,
Universitat de les Illes Balears, Ctra. de Valldemossa km. 7.5,
07122 Palma de Mallorca, Spain.}  %
  \and O. Valero\footremember{alleydd}{Departament de Ci\`encies Matem\`atiques i Inform\`atica,
Universitat de les Illes Balears, Ctra. de Valldemossa km. 7.5,
07122 Palma de Mallorca, Spain. } \footnote{Corresponding author. }%
  }
\title{On fixed point theory in partially ordered sets and an application to asymptotic complexity of algorithm}
\begin{document}
\maketitle
\begin{abstract}
The celebrated Kleene fixed point theorem is crucial in the mathematical modelling of recursive specifications in Denotational Semantics. In this paper we discuss whether the hypothesis of the aforementioned result can be weakened. An affirmative answer to the aforesaid inquiry is provided so that a characterization of those properties that a self-mapping must satisfy in order to guarantee that its set of fixed points is non-empty when no notion of completeness are assumed to be satisfied by the partially ordered set. Moreover, the case in which the partially ordered set is coming from a quasi-metric space is treated in depth. Finally, an application of the exposed theory is obtained. Concretely, a mathematical method to discuss the asymptotic complexity of those algorithms whose running time of computing  {fulfills} a recurrence equation is presented.  {Moreover, the aforesaid method retrieves the fixed point based methods that appear in the literature for asymptotic complexity analysis of algorithms. However, our new method improves the aforesaid methods because it imposes fewer requirements than those that have been assumed in the literature and, in addition, it allows to state simultaneously upper and lower asymptotic bounds for the running time computing.}

\end{abstract}

keywords: partial order, quasi-metric, fixed point, Kleene, asymptotic complexity, recurrence equation.

\section{Introduction}\label{Int}
Fixed point theory in partially ordered sets plays a central role in  {the research activity in Mathematics and Computer Science (\cite{Davey,JGL2013,SedaHit2,RusPetrusel2008,Stoy})}. In particular, Kleene's fixed point theorem is one of the fundamental pillars of Denotational Semantics (see, for instance, \cite{Davey,Manna,Stoy}). The aforesaid result allows to state the so-called Scott's induction principle which models the meaning of recursive specifications in programming languages as the fixed point of non-recursive monotone self-mappings defined in partially ordered sets, in such a way that the aforesaid fixed point is the supremum of the sequence of successive iterations of the non-recursive mapping acting on a distinguished element of the model (see \cite{Scott2,Scott}). In Scott's approach, the non-recursive mapping models the evolution of the program execution and the partial order encodes some computational information notion so that each iteration of the mapping matches up with an element of the mathematical model which is greater than (or equal to) those that are associated to the preceding steps of the computational process. It is assumed that in each step the computational process gives more information about the meaning of the denotational specification than the preceding steps. Therefore, the aforementioned fixed point encodes the total information about the meaning provided by the elements of the increasing sequence of successive iterations and, in addition, no more information can be extracted by the fixed point than that provided by each element of such a sequence.

In order to guarantee the existence of fixed point of a monotone self-mapping, Kleene's fixed point theorem assumes conditions about the partially ordered set (order-completeness) and the self-mapping (order-continuity).  {However, in many real applications one of two conditions can be unfulfilled. Motivated, in part, by this fact a few works have focused their efforts on generalized versions of Kleene's fixed point theorem recently (see, for instance, \cite{EsikRon,FomenkoPodop16,FomenkoPodop17}). In the original version of the celebrated Kleene fixed point theorem, and also in the aforesaid references, the assumed conditions} have a global character, i.e., each element of the partially ordered set (the mathematical model) must  {satisfy} them. However, in the  {aforementioned real applications, coming, for example, from Denotational Semantics or Logic Programming}, to check the aforesaid conditions for all elements of the partially ordered set is unnecessary. In fact, the proof of Kleene's fixed point theorem is based on the construction of a sequence of iterations from a fixed element and, thus, the global assumed conditions apply for warranting the desired conclusions. In the view of the preceding remark, it seems natural to wonder whether the hypothesis in the statement of Kleene's fixed point theorem can be weakened in such a way that the new ones are better suited to the demands of the real problems (with local more than global character) and, at the same time, preserve the spirit of the original Kleene's fixed point theorem.

In this paper we provide an affirmative answer to the  {question posed}. Concretely, we characterize those properties that a self-mapping must  {satisfy} in order to ensure that its set of fixed points is non-empty when a general partially ordered set is under consideration and no notion of order-completeness is assumed. Moreover, we derive a few characterization when, in addition, the partially ordered set is chain complete and the self-mapping is order-continuous. Special interest is paid to that case in which the partially ordered set is coming from a quasi-metric space,  {since such generalized distances have shown to be useful in Denotational Semantics, Logic Programming and Asymptotic Complexity of algorithms (\cite{JGL2013,SedaHit2,Schellekens})}.  {Finally, the developed theory is applied to discuss the asymptotic complexity of those algorithms whose running time of computing fulfills a recurrence equation. Thus, on the one hand, a fixed point method for asymptotic complexity is developed in such a way that those fixed point methods given in \cite{RPV,RomVal2013,Schellekens} and that are based on the use of contractive mappings are retrieved as a particular case. On the other hand, the aforementioned new fixed point method captures the essence of that for discussing the asymptotic complexity of Probabilistic Divide and Conquer algorithms given in \cite{GRRS2008}. Nonetheless, our new method improves the aforesaid methods because it imposes fewer requirements than those that have been assumed in the literature and, in addition, it allows to state simultaneously upper and lower asymptotic bounds for the running time computing. Besides, the new fixed point method also preserves the original Scott's ideas providing a common framework for Denotational Semantics and Asymptotic Complexity of algorithms.}

\section{The fixed point theorems}\label{Results}
This section is devoted to discern which are the minimal conditions that allow to guarantee the existence of fixed point for a self-mapping defined in partially ordered sets. In order to achieve our objective we recall a few pertinent notions.

Following \cite{Davey}, a partially ordered set is a pair $(X,\preceq)$ such that $X$ is a nonempty set and $\preceq$ is a
binary relation on $X$ which holds, for all $x,y,z\in X$:

$
\begin{array}{l}
\mathrm{(i)}\text{ }x\preceq x \\
\mathrm{(ii)}\text{ }x\preceq y\text{\textrm{\ and }}y\preceq x \Rightarrow x=y \\
\mathrm{(iii)}\text{ }x\preceq y\text{\textrm{\ and }}y\preceq z \Rightarrow x\preceq z%
\end{array}%
\begin{array}{l}
\text{\textrm{(reflexivity)}}, \\
\text{\textrm{(antisymmetry),}} \\
\text{\textrm{(transitivity).}}%
\end{array} $

If $(X,\preceq)$ is a partially ordered set and $Y\subseteq X$, then an upper bound for $Y$ in $(X,\preceq)$ is an element $x\in X$ such that $y\preceq x$ for all $y\in Y$. An element $z\in Y$ is the minimum of $Y$ in $(X,\preceq)$ provided that $z\preceq y$ for all $y\in Y$. Thus, the supremum of  $Y$ in $(X,\preceq)$, if exists, is an element $x^\star\in X$ which is an upper bound for $Y$ and, in addition, it is the minimum of the set $(UB(Y),\preceq)$, where $UB(Y)=\{u\in X:u \mbox{ is an upper bound for }Y \}$. Moreover, fixed $x\in X$, the sets $\{y\in X: x\preceq y\}$ and $\{y\in X: y\preceq x\}$ will be denoted by $\uparrow_{\preceq} x$ and $\downarrow_{\preceq} x$, respectively.

According to \cite{Baranga}, a partially ordered set $(X,\preceq)$ is said to be chain complete provided that there exists the supremum of every increasing sequence. Of course, a sequence $(x_{n})_{n\in \mathbb{N}^\star}$ is said to be increasing whenever $x_{n}\preceq x_{n+1}$ for all $n\in \mathbb{N}$, where $\mathbb{N}^\star$ denotes the set $\mathbb{N}\cup\{0\}$ and $\mathbb{N}$ denotes the set of positive integer numbers. 

After recalling the above notions on partially ordered sets, we present the well-known Kleene's fixed point theorem (see \cite{Baranga,Davey,Manna,Stoy}). First, let us recall that a mapping $f:X\rightarrow X$ is said to be $\preceq$-continuous provided that the supremum of the sequence $(f(x_{n}))_{n\in\mathbb{N}^\star}$ is $f(x)$ for every increasing sequence $(x_{n})_{n\in \mathbb{N}}$ whose supremum in $(X,\preceq)$ exists and is $x$.

\begin{theorem}\label{monotdcpo2}Let $(X,\preceq)$ be a chain complete partially ordered set and let $f:X\rightarrow X$ be a $\preceq$-continuous mapping. Assume that there exist $x_{0}\in X$ such that $x_{0}\preceq f(x_{0})$.

Then, there exist a fixed point  $x^{\star}$ which is supremum of the sequence $(f^n(x_0))_{n\in\mathbb{N}^\star}$ and, thus, $x^{\star}\in \uparrow_{\preceq}x_{0}$. Moreover, $x^{\star}\in \downarrow_{\preceq}y_{0}$ provided that $y_{0}\in X$ such that $x_{0}\preceq y_{0}$ and $f(y_{0})\preceq y_{0}$. Furthermore, $x^{\star}$ is the minimum fixed point in $\uparrow_{\preceq}x_{0}$.
\end{theorem}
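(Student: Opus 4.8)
The plan is to build the canonical Kleene iteration sequence and exploit chain completeness to extract its supremum as the desired fixed point. First I would establish a preliminary fact that the statement does not mention explicitly but that the proof needs: $\preceq$-continuity forces $f$ to be monotone. Given $x \preceq y$, consider the increasing sequence $x, y, y, y, \ldots$, whose supremum is $y$; by $\preceq$-continuity the supremum of $f(x), f(y), f(y), \ldots$ is $f(y)$, and since $f(y)$ must be an upper bound of this sequence we obtain $f(x) \preceq f(y)$. With monotonicity in hand, a routine induction starting from the hypothesis $x_{0} \preceq f(x_{0})$ shows that the sequence $(f^{n}(x_{0}))_{n\in\mathbb{N}^\star}$ is increasing: if $f^{n}(x_{0}) \preceq f^{n+1}(x_{0})$, then applying $f$ yields $f^{n+1}(x_{0}) \preceq f^{n+2}(x_{0})$.

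Because $(X,\preceq)$ is chain complete, the increasing sequence $(f^{n}(x_{0}))_{n\in\mathbb{N}^\star}$ admits a supremum, which I name $x^{\star}$. To verify that $x^{\star}$ is a fixed point I would apply $\preceq$-continuity to this very sequence: the supremum of $(f(f^{n}(x_{0})))_{n\in\mathbb{N}^\star} = (f^{n+1}(x_{0}))_{n\in\mathbb{N}^\star}$ equals $f(x^{\star})$. The point I expect to require care is arguing that this shifted sequence has the same supremum as the original one. This follows because the two sequences differ only in the extra initial term $x_{0}$ of the original, and $x_{0} \preceq f(x_{0})$ guarantees that $x_{0}$ lies below every term of the shifted sequence; hence the sets of upper bounds coincide and the suprema agree, giving $f(x^{\star}) = x^{\star}$. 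Since $x^{\star}$ is the supremum of a sequence whose zeroth term is $x_{0} = f^{0}(x_{0})$, we immediately get $x_{0} \preceq x^{\star}$, that is, $x^{\star} \in \uparrow_{\preceq} x_{0}$.

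For the upper-bound assertion, suppose $y_{0} \in X$ satisfies $x_{0} \preceq y_{0}$ and $f(y_{0}) \preceq y_{0}$. I would show by induction that $f^{n}(x_{0}) \preceq y_{0}$ for every $n \in \mathbb{N}^\star$: the base case is the hypothesis $x_{0} \preceq y_{0}$, and assuming $f^{n}(x_{0}) \preceq y_{0}$, monotonicity of $f$ together with $f(y_{0}) \preceq y_{0}$ yields $f^{n+1}(x_{0}) \preceq f(y_{0}) \preceq y_{0}$. Thus $y_{0}$ is an upper bound of the sequence, and since $x^{\star}$ is its least upper bound we conclude $x^{\star} \preceq y_{0}$, i.e. $x^{\star} \in \downarrow_{\preceq} y_{0}$. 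Finally, the minimality statement is a direct consequence of this: if $z$ is any fixed point of $f$ with $x_{0} \preceq z$, then $f(z) = z \preceq z$, so $z$ satisfies the conditions imposed on $y_{0}$, whence $x^{\star} \preceq z$. Therefore $x^{\star}$ is the minimum fixed point lying in $\uparrow_{\preceq} x_{0}$, which completes the proof.
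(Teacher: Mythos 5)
Your proof is correct, and it follows essentially the same path the paper takes: the paper states Theorem \ref{monotdcpo2} without proof as a recalled classical result, but when it re-derives (and strengthens) it later via Theorem \ref{orbKleene} and Corollaries \ref{orbKleenecor} and \ref{orbdcop2}, the argument is exactly yours --- monotonicity of a $\preceq$-continuous map, the increasing iteration sequence, chain completeness for the supremum $x^{\star}$, the observation that the shifted sequence has the same supremum, continuity to get $f(x^{\star})=x^{\star}$, and the $y_{0}$-induction for the bound and minimality claims. Your explicit verification that $\preceq$-continuity implies monotonicity (via the sequence $x,y,y,\dots$) is a welcome addition to a fact the paper only cites as well known.
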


It is well known that each $\preceq$-continuous mapping is monotone. So, Kleene's theorem cannot be applied, at least, to non-monotone mappings. However, the next example shows that there are self-mappings on a chain complete partially ordered set, which fulfill the conclusions of the above theorem, but there are not monotone (and consequently, there are not $\preceq$-continuous).

\begin{example}\label{non_order_cont}Consider the chain complete partially ordered set $(\left[0,1\right],\leq)$,  where $\leq$ stands for the usual partial order defined on $[0,1]$. Define $f:\left[0,1\right]\rightarrow\left[0,1\right]$ by $$f(x)=\left\{\begin{array}{ll}
1-\frac{x}{2},& \text{ if } x\in\left[0,\frac{1}{2}\right[\\\\
\frac{1+x}{2}, & \text{ if } x\in \left[\frac{1}{2},1\right]
\end{array}\right..$$

On the one hand, we can observe that $f$ is not monotone on $(\left[0,1\right],\leq)$, so it is not $\leq$-continuous. Nevertheless, $f$ has as a fixed point $x=1$.

On the other hand,  $\frac{1}{2}\leq f(\frac{1}{2})$, since $f(\frac{1}{2})=\frac{3}{4}$. Furthermore, a straightforward computation shows that the sequence $(f^n(\frac{1}{2}))_{n\in\mathbb{N}^\star}$ is increasing in $([0,1],\leq)$ and, in addition, $1$ is its supremum. The rest of conclusions of Theorem \ref{monotdcpo2} are clearly obtained due to the fact that $1$ is the supremum of $\left[0,1\right]$. 
\end{example}

The preceding example suggests the possibility of providing a more general version of Kleene's fixed point theorem where weakener conditions are assumed. To this end, we introduce the following concept related to $\preceq$-continuity.

\begin{definition}
Let  $(X,\preceq)$ be a partially ordered set and let $x_{0}\in X$. A mapping $f:X\rightarrow X$ will be said to be orbitally $\preceq$-continuous at $x_{0}$ provided that $f$ preserves the supremum of the sequence $(f^{n}(x_{0}))_{n\in\mathbb{N}^\star}$, i.e., $f(x)$ is the supremum of the sequence $(f^{n+1}(x_{0}))_{n\in\mathbb{N}^\star}$ in $(X,\preceq)$, whenever $x$ is the supremum of sequence $(f^{n}(x_{0}))_{n\in\mathbb{N}^\star}$.
\end{definition}


It is not hard to check that the self-mapping defined in Example \ref{non_order_cont} is orbitally $\preceq$-continuous at $\frac{1}{2}$.

Notice that, initially, there is not a direct relationship between the preceding notion and the $\preceq$-continuity. Clearly there are $\preceq$-continuous self-mappings that are not orbitally $\preceq$-continuous such as the next example illustrates.

\begin{example}\label{NOOC}Consider the partially ordered set $(\left[0,1\right],\preceq_{1})$ where $\preceq_{1}$ is defined for all $x,y\in [0,1]$ as follows:
$$x\preceq_{1}y \Leftrightarrow x=y \text{ or } y=1.$$ Define $f:\left[0,1\right]\rightarrow\left[0,1\right]$ by $f(x)=\frac{x}{2}$. Clearly $f$ is $\preceq_{1}$-continuous, since a  sequence $(x_{n})_{n\in\mathbb{N}^{\star}}$ is increasing in $([0,1],\preceq_{1})$ provided that $x_{n}=x_{n+1}$ for all $n\in\mathbb{N}^{\star}$. However, $f$ is not orbitally $\preceq_{1}$-continuous, for instance, at $1$. Indeed, the sequence $(f^{n}(1))_{n\in\mathbb{N}^{\star}}$ is given by $$f^{n}(1)=\left\{\begin{array}{ll}
1 & \text{ if } n=0 \\
\frac{1}{2^{n}} & \text{ if } n\geq 1 \\
\end{array}\right.$$ and, thus, it has $1$ as the supremum in $([0,1],\preceq_{1})$. Nevertheless, 
the sequence $(f^{n+1}(1))_{n\in\mathbb{N}^{\star}}$ is given by 
$f^{n+1}(1)=\frac{1}{2^{n+1}} $ for all $n\in \mathbb{N}^{\star}$ and it has not $f(1)$ as the supremum in $([0,1],\preceq_{1})$.
\end{example}

It must be pointed out that, given a partially ordered set $(X,\preceq)$ and $x_{0}\in X$, every $\preceq$-continuous self-mapping is orbitally $\preceq$-continuous at $x_{0}$ whenever $x_{0}\preceq f(x_{0})$.

The next example shows that there are orbitally $\preceq$-continuous self-mappings that are not $\preceq$-continuous.

\begin{example}\label{NM2}Consider the partially ordered set $(X,\preceq_{X})$ such that $X=[0,1]\cup \{2\}$ and the partial order $\preceq_{X}$ defined on $X$ as follows:

$$x\preceq_{X}y\Leftrightarrow \left\{\begin{array}{cl}
x,y\in [0,1]   \mbox{ and } y\leq x&\\
\mbox{ or } &\\
x\in ]0,1]\cup\{2\} \mbox{ and } y=2& \\
\end{array}\right..
$$ Define the mapping $f(x)=0$ for all $x\in [0,1]$ and $f(2)=2$. It is clear that $f$ is not monotone, since $1\preceq_{X} 2$ but $0=f(1)\not\preceq_{X}f(2)=2$. So, it is not $\preceq_{X}$-continuous. It is clear that $f$ is orbitally $\preceq_{X}$-continuous at $x_{0}$, with $x_{0}\in[0,1]\cup\{2\}$. 
\end{example}

Even more, orbitally $\preceq$-continuity at any $x_{0}$ does not imply that the sequence $(f^{n}(x_{0}))_{n\in\mathbb{N}^\star}$ is increasing, as demonstrates the following example.

\begin{example}\label{ExDe}Consider the chain complete partially ordered set $(\left[0,1\right],\leq)$ introduced in Example \ref{non_order_cont}. Define $f:\left[0,1\right]\rightarrow\left[0,1\right]$ by $f(x)=\frac{x}{2}$. Take $x_0\in[0,1]$. Then, the sequence  $(f^{n}(x_0))_{n\in\mathbb{N}^\star}$ is decreasing, since $f^n(x_0)=\frac{x_0}{2^n}$, for each $n\in\mathbb{N}$. Furthermore, $x_0$ is the supremum of $(f^{n}(x_0))_{n\in\mathbb{N}^\star}$ and $\frac{x_0}{2}$ is the supremum of the sequence $(f^{n+1}(x_0))_{n\in\mathbb{N}^\star}$. Since $f(x_0)=\frac{x_0}{2}$ we have that $f$ is orbitally $\leq$-continuous at $1$.
\end{example}

Another restrictive condition of Theorem \ref{monotdcpo2} is the assumption of chain completeness of the partially ordered set. Indeed, the example below shows an instance of self-mapping defined in a non chain complete partially ordered which has a fixed point satisfying all the conclusions in the aforesaid theorem.

\begin{example}
Consider the partially ordered set $([0,2[,\leq)$, where $\leq$ stands for the usual partial order defined on $[0,2[$. Obviously, $([0,2[,\leq)$ is not chain complete. The mapping $f:[0,2[\rightarrow[0,2[$ given by $f(x)=\frac{x+1}{2}$ has $1$ as a fixed point. Moreover, the sequence $(f^{n}(0))_{n\in\mathbb{N}^\star}$ is increasing and $f$ is orbitally $\preceq$-continuous at $0$. Obviously $1$ is the supremum of $(f^{n}(0))_{n\in\mathbb{N}^\star}$ and $1\in \downarrow_{\leq}y$ such that $y\in[1,2[$ (notice that $f(y)\leq y \Leftrightarrow 1\leq y$ and $0\leq y$ for all $y\in [1,2[$).
\end{example}

In order to yield a generalized Kleene's fixed point theorem, the above exposed facts suggest the possibility of  demanding only conditions on the sequence $(f^{n}(x_{0}))_{n\in\mathbb{N}^\star}$, for a given $x_0$, in order to weaken to the maximum the assumptions in the statement of Kleene's fixed point theorem.

The next result shows that such a Kleene type fixed point is possible in the suggested direction in such a way that it provides two characterizations of those properties that a self-mappings must satisfy in order to have a fixed point in partially ordered sets (without order-completeness assumptions). Before stating it, let us point out that, given a partially ordered set $(X,\preceq)$ and a mapping $f:X\rightarrow X$, we will denote by $Fix(f)$ the set $\{x\in X: f(x)=x\}$.

\begin{theorem}\label{orbKleene}Let $(X,\preceq)$ be a partially ordered set and let $f:X\rightarrow X$ be a mapping. Then the following are equivalent:

\begin{enumerate}

\item[(1)] $x^{\star}\in Fix(f)\neq \emptyset$.

\item[(2)]  There exists $x_{0}\in X$ such that 
\begin{enumerate}
\item[(2.1)] The sequence $(f^n(x_0))_{n\in\mathbb{N}^\star}$ is increasing in $(X,\preceq)$,
\item[(2.2)] $x^{\star}$ is the supremum of $(f^n(x_0))_{n\in\mathbb{N}^\star}$ and, thus, $x^{\star}\in \uparrow_{\preceq}x_{0}$,
\item[(2.3)] $f$ is orbitally $\preceq$-continuous at $x_{0}$.
\end{enumerate}
\item[(3)]  There exists $z_{0}\in X$ such that 
\begin{enumerate}
\item[(3.1)] $z_0\preceq f(z_0)$ in $(X,\preceq)$,
\item[(3.2)] $x^{\star}$ is the supremum of $(f^n(z_0))_{n\in\mathbb{N}^\star}$ and, thus, $x^{\star}\in \uparrow_{\preceq}z_{0}$,
\item[(3.3)] $f$ is orbitally $\preceq$-continuous at $z_{0}$.
\end{enumerate}
\end{enumerate}
\end{theorem}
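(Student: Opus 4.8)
The plan is to establish the three-way equivalence by proving the cyclic chain of implications $(1) \Rightarrow (2) \Rightarrow (3) \Rightarrow (1)$. This is economical, since two of the three links are almost immediate and essentially all the content sits in the last one.

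For $(1) \Rightarrow (2)$ I would simply take $x_{0}=x^{\star}$. Because $f(x^{\star})=x^{\star}$, the orbit $(f^{n}(x^{\star}))_{n\in\mathbb{N}^\star}$ is the constant sequence, which is trivially increasing with supremum $x^{\star}$, so (2.1), (2.2) and $x^{\star}\in\uparrow_{\preceq}x_{0}$ all hold at once. Orbital $\preceq$-continuity at $x^{\star}$, i.e.\ (2.3), is also immediate: both $(f^{n}(x^{\star}))_{n\in\mathbb{N}^\star}$ and its shift $(f^{n+1}(x^{\star}))_{n\in\mathbb{N}^\star}$ are the constant sequence $x^{\star}$, whose supremum is $x^{\star}=f(x^{\star})$. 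For $(2)\Rightarrow(3)$ I would take $z_{0}=x_{0}$; the only thing to verify is that (2.1) yields (3.1), and indeed if $(f^{n}(x_{0}))_{n\in\mathbb{N}^\star}$ is increasing then in particular $z_{0}=f^{0}(x_{0})\preceq f^{1}(x_{0})=f(z_{0})$. Conditions (3.2) and (3.3) are literally (2.2) and (2.3) with $z_{0}$ in place of $x_{0}$, so they transfer unchanged.

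The heart of the argument is $(3)\Rightarrow(1)$, and this is where I expect the one genuine subtlety. By (3.2) we have $x^{\star}=\sup_{n} f^{n}(z_{0})$, and by (3.3) orbital $\preceq$-continuity at $z_{0}$ then guarantees $f(x^{\star})=\sup_{n} f^{n+1}(z_{0})$. To conclude $f(x^{\star})=x^{\star}$ it remains to show that the shifted orbit $(f^{n+1}(z_{0}))_{n\in\mathbb{N}^\star}$ has the same supremum as the full orbit $(f^{n}(z_{0}))_{n\in\mathbb{N}^\star}$. This is exactly where hypothesis (3.1) enters: the two sequences differ only by the extra initial term $z_{0}$, and since $z_{0}\preceq f(z_{0})$ with $f(z_{0})$ lying in the shifted orbit, every upper bound $u$ of the shifted orbit satisfies $z_{0}\preceq f(z_{0})\preceq u$ by transitivity, hence is an upper bound of the full orbit as well; the reverse inclusion is trivial. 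Therefore $UB((f^{n+1}(z_{0}))_{n\in\mathbb{N}^\star})=UB((f^{n}(z_{0}))_{n\in\mathbb{N}^\star})$, and because a supremum is by definition the minimum of the associated set of upper bounds --- which is unique when it exists, by antisymmetry --- the two suprema coincide. Thus $f(x^{\star})=\sup_{n} f^{n+1}(z_{0})=\sup_{n} f^{n}(z_{0})=x^{\star}$, i.e.\ $x^{\star}\in Fix(f)$.

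I would flag that the pitfall to avoid is the temptation to argue the supremum coincidence from monotonicity of $f$ or from increasingness of the orbit: neither is available under (3), where only the single comparison $z_{0}\preceq f(z_{0})$ is assumed, and indeed Example \ref{ExDe} already shows that such orbits need not be increasing. The argument above sidesteps this by working directly with the sets of upper bounds, so it relies on nothing beyond reflexivity, transitivity, and the defining property of the supremum --- which is precisely what makes the statement hold without any completeness assumption on $(X,\preceq)$.
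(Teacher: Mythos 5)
Your proposal is correct and follows essentially the same route as the paper: the cycle $(1)\Rightarrow(2)\Rightarrow(3)\Rightarrow(1)$ with the fixed point itself serving as $x_{0}$, and the key step in $(3)\Rightarrow(1)$ being that $z_{0}\preceq f(z_{0})$ forces the orbit and its shift to share the same supremum, after which orbital $\preceq$-continuity closes the argument. The only difference is that you spell out the upper-bound-set identity that the paper merely asserts, which is a welcome bit of extra care rather than a divergence.
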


\begin{proof} To show that $(1)\Rightarrow (2)$ it is sufficient to set $x^{\star}=x_{0}$ with $x^{\star}\in Fix(f)$. Furthermore, it is not hard to check that $(2)\Rightarrow (3)$. Indeed, if we take $z_0=x_0$, then (3) is satisfied, since $z_0\preceq f(z_0),$ due to the sequence $(f^nz_0))_{n\in\mathbb{N}^\star}$ is increasing in $(X,\preceq)$. So, it remains to prove that $(3)\Rightarrow (1)$. To this end, suppose that there exist $z_{0}\in X$ satisfying $(3.1), (3.2)$ and $(3.3)$. On the one hand, since $x^\star$ is the supremum of the sequence $(f^n(z_0))_{n\in\mathbb{N}^\star}$ in $(X,\preceq)$ and $z_0\preceq f(z_0)$, then $x^\star$ is the supremum $(f^{n+1}(z_0))_{n\in\mathbb{N}^\star}$. On the other hand, since $f$ is orbitally $\preceq$-continuous at $z_{0}$ we have that $f(x^{\star})$ is the supremum of $(f^{n+1}(z_0))_{n\in\mathbb{N}^\star}$ in $(X,\preceq)$. Hence $f(x^{\star})=x^{\star}$. \end{proof}

The next example shows that Theorem \ref{orbKleene} does not give, in general, the uniqueness of fixed point.

\begin{example}\label{NUFP}Consider the partially ordered set $([0,1],\leq)$ introduced in Example \ref{non_order_cont}. Let $f:[0,1]\rightarrow [0,1]$ be the mapping given by $f(x)=x$ for all $x\in [0,1]$. It is obvious that the sequence $(f^n(x_{0}))_{n\in\mathbb{N}^\star}$ is increasing in $([0,1],\leq)$, for all $x_{0}\in[0,1]$, and in addition, $x_{0}$ is the supremum of $(f^n(x_{0}))_{n\in\mathbb{N}^\star}$ in $([0,1],\leq)$. Moreover, $f$ is orbitally $\leq$-continuous at $x_{0}$, for all $x_{0}\in [0,1]$. Clearly, $Fix(f)=[0,1]$.
\end{example}


In the particular case in which the self-mapping is $\preceq$-continuous we get the following result.

\begin{corollary}\label{orbKleenecor}Let $(X,\preceq)$ be a partially ordered set and let $f:X\rightarrow X$ be a mapping. Assume that there exists $x_{0}\in X$ such that 
\begin{enumerate}
\item[(1)] $x_0\preceq f(x_0)$,
\item[(2)] $x^{\star}$ is the supremum of $(f^n(x_0))_{n\in\mathbb{N}^\star}$ and, thus, $x^{\star}\in \uparrow_{\preceq}x_{0}$,
\item[(3)] $f$ is $\preceq$-continuous.
\end{enumerate}
Then $x^{\star}\in Fix(f)\neq \emptyset$. Moreover, $x^{\star}\in \downarrow_{\preceq}y_{0}$ provided that $y_{0}\in X$ such that $x_{0}\preceq y_{0}$ and $f(y_{0})\preceq y_{0}$. Furthermore, $x^{\star}$ is the minimum of $Fix(f)\cap\uparrow_{\preceq}x_{0}$ in $(X,\preceq)$.
\end{corollary}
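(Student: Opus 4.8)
The plan is to reduce the existence claim to Theorem~\ref{orbKleene} and then to establish the two order-theoretic refinements directly from the fact that $x^{\star}$ is a \emph{least} upper bound.

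First I would verify that the three hypotheses of the corollary furnish condition~$(3)$ of Theorem~\ref{orbKleene} with the choice $z_{0}=x_{0}$. Indeed, $(3.1)$ is precisely hypothesis~$(1)$ and $(3.2)$ is precisely hypothesis~$(2)$. For $(3.3)$ I would invoke the remark preceding Theorem~\ref{orbKleene}: since $f$ is $\preceq$-continuous and $x_{0}\preceq f(x_{0})$, the mapping $f$ is orbitally $\preceq$-continuous at $x_{0}$. (Were one to argue it afresh: $\preceq$-continuity makes $f$ monotone, so $x_{0}\preceq f(x_{0})$ propagates inductively to an increasing orbit $(f^{n}(x_{0}))_{n\in\mathbb{N}^\star}$; applying $\preceq$-continuity to this increasing sequence, whose supremum is $x^{\star}$ by~$(2)$, yields that $f(x^{\star})$ is the supremum of $(f^{n+1}(x_{0}))_{n\in\mathbb{N}^\star}$, which is exactly orbital $\preceq$-continuity at $x_{0}$.) With condition~$(3)$ in hand, the implication $(3)\Rightarrow(1)$ of Theorem~\ref{orbKleene} gives $x^{\star}\in Fix(f)\neq\emptyset$.

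Next I would prove the bound $x^{\star}\in\downarrow_{\preceq}y_{0}$ for any $y_{0}$ satisfying $x_{0}\preceq y_{0}$ and $f(y_{0})\preceq y_{0}$. The key observation is that such a $y_{0}$ is an upper bound of the whole orbit $(f^{n}(x_{0}))_{n\in\mathbb{N}^\star}$. I would show this by induction on $n$: the base case $f^{0}(x_{0})=x_{0}\preceq y_{0}$ is the hypothesis on $y_{0}$, and if $f^{n}(x_{0})\preceq y_{0}$ then the monotonicity of $f$ gives $f^{n+1}(x_{0})\preceq f(y_{0})\preceq y_{0}$, using $f(y_{0})\preceq y_{0}$ and transitivity. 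Hence $y_{0}\in UB(\{f^{n}(x_{0}):n\in\mathbb{N}^\star\})$, and since $x^{\star}$ is the least upper bound of this set by~$(2)$, we conclude $x^{\star}\preceq y_{0}$.

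Finally, for the minimality of $x^{\star}$ in $Fix(f)\cap\uparrow_{\preceq}x_{0}$, I would first note that $x^{\star}$ itself belongs to this set, being a fixed point (by the first part) lying above $x_{0}$ (by~$(2)$). To see it is the least such element, take any $y\in Fix(f)\cap\uparrow_{\preceq}x_{0}$; then $f(y)=y$ forces $f(y)\preceq y$, and together with $x_{0}\preceq y$ this is exactly the condition required of $y_{0}$ in the previous paragraph, whence $x^{\star}\preceq y$. I do not anticipate a genuine obstacle: once Theorem~\ref{orbKleene} and the monotonicity of $\preceq$-continuous maps are available, the only point needing care is the routine induction showing that every pre-fixed point above $x_{0}$ bounds the orbit, from which both the $\downarrow_{\preceq}y_{0}$ statement and the minimality follow uniformly.
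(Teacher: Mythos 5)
Your proposal is correct and follows essentially the same route as the paper's proof: invoke the remark that $\preceq$-continuity together with $x_{0}\preceq f(x_{0})$ yields orbital $\preceq$-continuity at $x_{0}$, apply Theorem~\ref{orbKleene}, and then use monotonicity to show that any pre-fixed point above $x_{0}$ bounds the orbit, so that the least-upper-bound property of $x^{\star}$ gives both remaining claims. The only cosmetic difference is that you derive the minimality statement as a corollary of the $\downarrow_{\preceq}y_{0}$ argument, whereas the paper repeats the supremum argument directly; the content is identical.
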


\begin{proof}Since $f$ is monotone and $x_{0}\preceq f(x_{0})$ we have that $(f^n(x_0))_{n\in\mathbb{N}^\star}$ is increasing in $(X,\preceq)$. Since $f$ is $\preceq$-continuous and $x_{0}\preceq f(x_{0})$ we have that $f$ is orbitally $\preceq$-continuous at $x_{0}$. Hence the existence of $x^{\star}\in X$ such that $x^{\star}\in Fix(f)$ is guaranteed by Theorem  \ref{orbKleene}.

Next we assume that there exists $y_{0}\in X$ such that $x_{0}\preceq y_{0}$ and that $f(y_{0})\preceq y_{0}$. Then $f^{n}(x_{0})\preceq f(y_{0})\preceq y_{0}$ for all $n\in\mathbb{N}$. It follows that $y_{0}$ is an upper bound of $(f^n(x_0))_{n\in\mathbb{N}^\star}$ in $(X,\preceq)$. Moreover, since $x^{\star}$ is the supremum of $(f^n(x_0))_{n\in\mathbb{N}^\star}$ in $(X,\preceq)$ we deduce that $x^{\star}\preceq y_{0}$. Whence we obtain that $x^{\star}\in \downarrow_{\preceq}y_{0}$.

It remains to prove that $x^{\star}$ is the minimum of $Fix(f)\cap\uparrow_{\preceq}x_{0}$ in $(X,\preceq)$. With this aim we suppose that there exists $y^{\star}\in Fix(f)\cap\uparrow_{\preceq}x_{0}$. As it was pointed out above $f$ is monotone and, thus, $f^{n}(x_{0})\preceq y^{\star}$. So, since $x^{\star}$ is the supremum of $(f^n(x_0))_{n\in\mathbb{N}^\star}$ we have that $x^{\star}\preceq y^{\star}$ as we claimed. \end{proof}

Taking into account Theorem \ref{orbKleene} we obtain the next result.

\begin{corollary}\label{orbdcop}Let $(X,\preceq)$ be a chain complete partially ordered set and let $f:X\rightarrow X$ be a mapping. Then the following are equivalent:

\begin{enumerate}

\item[(1)] $Fix(f)\neq \emptyset$.

\item[(2)]  There exists $x_{0}\in X$ such that 
\begin{enumerate}
\item The sequence $(f^n(x_0))_{n\in\mathbb{N}^\star}$ is increasing in $(X,\preceq)$,
\item $f$ is orbitally $\preceq$-continuous at $x_{0}$.
\end{enumerate}
\end{enumerate}
In addition, there exists $x^{\star}\in Fix(f)$ such that $x^{\star}$ is the supremum of the sequence $(f^n(x_0))_{n\in\mathbb{N}^\star}$ and, thus, $x^{\star}\uparrow_{\preceq}x_{0}$. 
\end{corollary}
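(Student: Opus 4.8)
The plan is to derive this corollary directly from Theorem \ref{orbKleene}, using chain completeness \emph{only} to supply the supremum-existence condition (2.2) that appears explicitly in the theorem but is suppressed from item (2) of the corollary's statement. So essentially no new work is needed beyond recognizing which hypothesis the completeness replaces.

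First I would handle $(1)\Rightarrow(2)$. Since $Fix(f)\neq\emptyset$, I would pick any $x^{\star}\in Fix(f)$ and set $x_{0}=x^{\star}$. Then $f^{n}(x_{0})=x^{\star}$ for every $n\in\mathbb{N}^\star$, so the orbit is constant and hence trivially increasing, which gives condition (a). Moreover $f$ is orbitally $\preceq$-continuous at $x_{0}$, because the supremum of the constant orbit $(f^{n}(x_{0}))_{n\in\mathbb{N}^\star}$ is $x^{\star}$ and $f(x^{\star})=x^{\star}$ is the supremum of the (also constant) shifted orbit $(f^{n+1}(x_{0}))_{n\in\mathbb{N}^\star}$; this gives condition (b). This is exactly the $(1)\Rightarrow(2)$ step of Theorem \ref{orbKleene}, specialized here.

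For the converse $(2)\Rightarrow(1)$, suppose $x_{0}\in X$ satisfies (a) and (b). By (a) the orbit $(f^{n}(x_{0}))_{n\in\mathbb{N}^\star}$ is an increasing sequence, so chain completeness of $(X,\preceq)$ guarantees that it possesses a supremum $x^{\star}$. This is precisely where the completeness hypothesis enters: it furnishes condition (2.2) of Theorem \ref{orbKleene} for free, so that I need not assume the existence of a supremum as a separate hypothesis. Combining this with (a), which is condition (2.1), and (b), which is condition (2.3), all three requirements of item (2) of Theorem \ref{orbKleene} are met, and that theorem yields $x^{\star}\in Fix(f)$, whence $Fix(f)\neq\emptyset$.

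Finally, the \emph{in addition} clause is immediate from the construction: the fixed point $x^{\star}$ produced above is by definition the supremum of the orbit $(f^{n}(x_{0}))_{n\in\mathbb{N}^\star}$, and since $x_{0}=f^{0}(x_{0})$ is itself a term of this sequence we have $x_{0}\preceq x^{\star}$, that is, $x^{\star}\in\uparrow_{\preceq}x_{0}$. I do not expect a genuine obstacle in this argument; the only point worth flagging is conceptual rather than technical, namely that chain completeness is invoked solely to secure the existence of the supremum, and no additional monotonicity or continuity reasoning is required beyond what Theorem \ref{orbKleene} already encapsulates.
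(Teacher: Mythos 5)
Your proof is correct and follows essentially the same route as the paper: for $(1)\Rightarrow(2)$ you take $x_{0}=x^{\star}$ with a constant orbit, and for $(2)\Rightarrow(1)$ you use chain completeness solely to produce the supremum $x^{\star}$ of the increasing orbit, after which Theorem \ref{orbKleene} delivers $x^{\star}\in Fix(f)$ together with the \emph{in addition} clause. No gaps.
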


\begin{proof} By the same arguments as in Theorem \ref{orbKleene} we have that $(1)\Rightarrow (2)$. To show that $(2)\Rightarrow (1)$, assume that there exists $x_0\in X$ satisfying (a) and (b). The fact that the partially ordered set $(X,\preceq)$ is chain complete provides the existence of $x^{\star}\in X$ such that $x^{\star}$ is the supremum of $(f^n(x_0))_{n\in\mathbb{N}^\star}$ and, thus, $x^{\star}\uparrow_{\preceq}x_{0}$. By Theorem \ref{orbKleene} we obtain that $x^{\star}\in Fix(f)$ and, hence, that $Fix(f)\neq \emptyset$.
\end{proof}

Combining Corollaries \ref{orbKleenecor} and \ref{orbdcop} we deduce the following one.

\begin{corollary}\label{orbdcop2}Let $(X,\preceq)$ be a chain complete partially ordered set and let $f:X\rightarrow X$ be a mapping. Assume that there exists $x_{0}\in X$ such that 
\begin{enumerate}
\item[(1)] $x_0\preceq f(x_0)$,
\item[(2)] $f$ is $\preceq$-continuous.
\end{enumerate} Then there exists $x^{\star}\in Fix(f)\neq \emptyset$. Moreover, $x^{\star}\in \downarrow_{\preceq}y_{0}$ provided that $y_{0}\in X$ such that $x_{0}\preceq y_{0}$ and $f(y_{0})\preceq y_{0}$. Furthermore, $x^{\star}$ is the minimum of $Fix(f)\cap\uparrow_{\preceq}x_{0}$ in $(X,\preceq)$.\end{corollary}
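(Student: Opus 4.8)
The plan is to obtain the conclusions by concatenating the two preceding corollaries, using chain completeness only to manufacture the supremum that will serve as the fixed point. First I would recall the remark made earlier in the excerpt that every $\preceq$-continuous mapping is monotone, so the hypothesis $x_{0}\preceq f(x_{0})$ forces the orbit $(f^{n}(x_{0}))_{n\in\mathbb{N}^\star}$ to be increasing in $(X,\preceq)$. Moreover, as was also observed in the excerpt, $\preceq$-continuity together with $x_{0}\preceq f(x_{0})$ guarantees that $f$ is orbitally $\preceq$-continuous at $x_{0}$. Hence both conditions (a) and (b) in assertion (2) of Corollary \ref{orbdcop} are fulfilled.

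Next I would invoke the chain completeness of $(X,\preceq)$: the increasing orbit $(f^{n}(x_{0}))_{n\in\mathbb{N}^\star}$ admits a supremum, which I denote by $x^{\star}$. Applying Corollary \ref{orbdcop} then delivers at once that $x^{\star}\in Fix(f)\neq\emptyset$ and that $x^{\star}$ is the supremum of the orbit, so in particular $x^{\star}\in\uparrow_{\preceq}x_{0}$. This settles the first assertion of the statement.

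For the two remaining assertions I would feed this same $x^{\star}$ into Corollary \ref{orbKleenecor}. Its three hypotheses now hold verbatim: (1) $x_{0}\preceq f(x_{0})$ is assumed; (2) $x^{\star}$ is the supremum of $(f^{n}(x_{0}))_{n\in\mathbb{N}^\star}$, which is precisely what chain completeness furnished in the previous step; and (3) $f$ is $\preceq$-continuous. Consequently Corollary \ref{orbKleenecor} yields both that $x^{\star}\in\downarrow_{\preceq}y_{0}$ whenever $y_{0}\in X$ satisfies $x_{0}\preceq y_{0}$ and $f(y_{0})\preceq y_{0}$, and that $x^{\star}$ is the minimum of $Fix(f)\cap\uparrow_{\preceq}x_{0}$ in $(X,\preceq)$.

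I expect no substantial obstacle, since the result is a genuine corollary of the two statements already established. The only point requiring care is to check that the element $x^{\star}$ produced by chain completeness in the appeal to Corollary \ref{orbdcop} coincides with the one to which Corollary \ref{orbKleenecor} is applied. This is automatic, because both corollaries are phrased in terms of the supremum of the very same orbit $(f^{n}(x_{0}))_{n\in\mathbb{N}^\star}$; thus hypothesis (2) of Corollary \ref{orbKleenecor}, namely the existence of that supremum, which is the sole ingredient not assumed outright in the present statement, is exactly what the chain completeness assumption supplies.
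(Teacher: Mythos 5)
Your proposal is correct and follows exactly the route the paper intends: the paper gives no explicit proof but states that the result is obtained by combining Corollaries \ref{orbKleenecor} and \ref{orbdcop}, which is precisely your argument (monotonicity from $\preceq$-continuity gives an increasing orbit, chain completeness supplies its supremum, and Corollary \ref{orbKleenecor} then delivers all three conclusions).
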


When the self-mapping is assumed to be only monotone (not $\preceq$-continuous), Theorem \ref{orbKleene} yields the following results which provide a bit more information about the fixed point than the aforesaid theorem and improves Corollary \ref{orbKleenecor}.

\begin{corollary}\label{orbmonot}Let $(X,\preceq)$ be a partially ordered set and let $f:X\rightarrow X$ be a monotone mapping. The following are equivalent:

\begin{enumerate}

\item[(1)] $x^{\star}\in Fix(f)\neq \emptyset$.

\item[(2)]  There exists $x_{0}\in X$ such that 
\begin{enumerate}
\item $x_{0}\preceq f(x_{0})$,
\item $x^{\star}$ is the supremum of $(f^n(x_0))_{n\in\mathbb{N}^\star}$ and, thus, $x^{\star}\in \uparrow_{\preceq}x_{0}$, 
\item $f$ is orbitally $\preceq$-continuous at $x_{0}$.
\end{enumerate}
\end{enumerate}
In addition, $x^{\star}\in \downarrow_{\preceq}y_{0}$ provided that $y_{0}\in X$ such that $y_{0}\in \uparrow_{\preceq}x_{0}$ and $f(y_{0})\preceq y_{0}$. Moreover, $x^{\star}$ is the minimum of $Fix(f)\cap\uparrow_{\preceq}x_{0}$ in $(X,\preceq)$.
\end{corollary}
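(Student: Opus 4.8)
The plan is to deduce the equivalence $(1)\Leftrightarrow(2)$ directly from Theorem~\ref{orbKleene} and then to extract the two additional assertions from monotonicity alone. First I would observe that condition $(2)$ of the present statement is \emph{verbatim} condition $(3)$ of Theorem~\ref{orbKleene} (taking $z_0=x_0$): items (a), (b), (c) here coincide, respectively, with $(3.1)$, $(3.2)$, $(3.3)$ there. Since Theorem~\ref{orbKleene} asserts $(1)\Leftrightarrow(3)$ for an \emph{arbitrary} self-mapping, the equivalence $(1)\Leftrightarrow(2)$ follows at once, without even invoking monotonicity. Thus the only genuinely new content lies in the ``In addition'' and ``Moreover'' clauses, where the hypothesis that $f$ is monotone is used.

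For the ``In addition'' clause, suppose $y_{0}\in\uparrow_{\preceq}x_{0}$ and $f(y_{0})\preceq y_{0}$. I would show by induction on $n$ that $f^{n}(x_{0})\preceq y_{0}$ for every $n\in\mathbb{N}^\star$: the base case $x_{0}\preceq y_{0}$ is exactly $y_{0}\in\uparrow_{\preceq}x_{0}$, and if $f^{n}(x_{0})\preceq y_{0}$ then monotonicity yields $f^{n+1}(x_{0})\preceq f(y_{0})\preceq y_{0}$. Hence $y_{0}$ is an upper bound of the sequence $(f^{n}(x_{0}))_{n\in\mathbb{N}^\star}$, and since $x^{\star}$ is its supremum (its least upper bound), we obtain $x^{\star}\preceq y_{0}$, that is, $x^{\star}\in\downarrow_{\preceq}y_{0}$.

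For the ``Moreover'' clause, I first note that $x^{\star}$ itself belongs to $Fix(f)\cap\uparrow_{\preceq}x_{0}$, since $x^{\star}\in Fix(f)$ by the equivalence just established and $x^{\star}\in\uparrow_{\preceq}x_{0}$ by (b). It then remains to check that $x^{\star}$ is a lower bound of this set. Given any $y^{\star}\in Fix(f)\cap\uparrow_{\preceq}x_{0}$, the same induction as above---now using $f(y^{\star})=y^{\star}$ in place of $f(y_{0})\preceq y_{0}$---yields $f^{n}(x_{0})\preceq y^{\star}$ for all $n\in\mathbb{N}^\star$. Because $x^{\star}$ is the supremum of $(f^{n}(x_{0}))_{n\in\mathbb{N}^\star}$, this forces $x^{\star}\preceq y^{\star}$, which establishes that $x^{\star}$ is the minimum of $Fix(f)\cap\uparrow_{\preceq}x_{0}$ in $(X,\preceq)$.

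I do not anticipate a real obstacle here: both additional clauses reduce to the elementary monotone induction showing $f^{n}(x_{0})\preceq b$ whenever $x_{0}\preceq b$ and $f(b)\preceq b$, combined with the least-upper-bound property of the supremum. The only point requiring a little care is recognizing that monotonicity plays \emph{no} role in the equivalence $(1)\Leftrightarrow(2)$---it is needed solely to compare the orbit $(f^{n}(x_{0}))_{n\in\mathbb{N}^\star}$ with an arbitrary element $y_{0}$ satisfying $f(y_{0})\preceq y_{0}$ or with a fixed point $y^{\star}$ lying above $x_{0}$.
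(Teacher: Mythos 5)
Your proof is correct and follows essentially the same route as the paper: reduce the equivalence to Theorem~\ref{orbKleene} and obtain the two additional clauses from the monotone induction $f^{n}(x_{0})\preceq y_{0}$ (resp.\ $f^{n}(x_{0})\preceq y^{\star}$) together with the least-upper-bound property of $x^{\star}$, which is exactly the argument the paper borrows from the proof of Corollary~\ref{orbKleenecor}. Your one (correct) refinement is noting that monotonicity is not needed for $(1)\Leftrightarrow(2)$ because condition $(2)$ is verbatim condition $(3)$ of Theorem~\ref{orbKleene}, whereas the paper instead uses monotonicity to check that the orbit is increasing before invoking that theorem.
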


\begin{proof}$(2)\Rightarrow (1)$. Since $x_{0}\preceq f(x_{0})$ and $f$ is monotone we have that the sequence $(f^n(x_0))_{n\in\mathbb{N}^\star}$ is increasing in $(X,\preceq)$. So all assumptions in the statement of Theorem \ref{orbKleene} are hold. Therefore, Theorem \ref{orbKleene} gives that there exists $x^{\star}\in Fix(f)$ which is the supremum of $(f^n(x_0))_{n\in\mathbb{N}^\star}$ and, thus, $x^{\star}\in \uparrow_{\preceq}x_{0}$.

The same arguments to those given in the proof of Corollary \ref{orbKleenecor} can be applied to conclude the remainder assertions in the statement of the result.

To prove that $(1)\Rightarrow (2)$ it is enough to take $x_{0}=x^{\star}$ with $x^{\star}\in Fix(f)$.

\end{proof}

The next example shows that we cannot omit the monotony of the self-mapping in the preceding result in order to guarantee that ``$x^{\star}\in \downarrow_{\preceq}y_{0}$ provided that $y_{0}\in X$ such that $y_{0}\in \uparrow_{\preceq}x_{0}$ and $f(y_{0})\preceq y_{0}$''.

\begin{example}\label{NM}Consider the partially ordered set $(X,\preceq_{X})$ and the self-mapping introduced in Example \ref{NM2}. It is clear that $0\in Fix(f)$. Corollary \ref{orbmonot} guarantees that there exists $x_{0}\in X$ ($x_{0}\in [0,1]$) such that $x_{0}\preceq_{X} f(x_{0})$, $0$ is the supremum of $(f^n(x_{0}))_{n\in\mathbb{N}^\star}$ and $f$ is orbitally $\preceq_{X}$-continuous at $x_{0}$. Moreover, it is obvious that $f(2)\preceq_{X} 2$ and $x_{0}\preceq_{X}2$ for all $x_{0}\in ]0,1]$. However, $0\not\preceq_{X}2$.  
\end{example}

The chain completeness of the partially ordered set allows to refine Corollary \ref{orbmonot} obtaining the result below.

\begin{corollary}\label{monotdcpo}Let $(X,\preceq)$ be a chain complete partially ordered set and let $f:X\rightarrow X$ be a monotone mapping. Then the following are equivalent:\begin{enumerate}
\item[(1)] $Fix(f)\neq \emptyset$.
\item[(2)]  There exists $x_{0}\in X$ such that 
\begin{enumerate}
\item $x_{0}\preceq f(x_{0})$,
\item $f$ is orbitally $\preceq$-continuous at $x_{0}$.
\end{enumerate}
\end{enumerate}
In addition, there exists $x^{\star}\in Fix(f)$ such that $x^{\star}$ is the supremum of the sequence $(f^n(x_0))_{n\in\mathbb{N}^\star}$ and, thus, $x^{\star}\uparrow_{\preceq}x_{0}$. Moreover, $x^{\star}\in \downarrow_{\preceq}y_{0}$ provided that $y_{0}\in X$ such that $x_{0}\preceq y_{0}$ and $f(y_{0})\preceq y_{0}$. Furthermore, $x^{\star}$ is the minimum of $Fix(f)\cap\uparrow_{\preceq}x_{0}$ in $(X,\preceq)$.
\end{corollary}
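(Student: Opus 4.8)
The plan is to obtain this statement by assembling the two previous corollaries, since under the hypotheses here the chain completeness of $(X,\preceq)$ makes the existence of the relevant supremum automatic, which is precisely the extra ingredient that Corollaries \ref{orbmonot} and \ref{orbdcop} needed supplied by hand. First I would dispatch the easy implication $(1)\Rightarrow(2)$: given $x^\star\in Fix(f)$, set $x_0=x^\star$; then $x_0=f(x_0)$ yields $x_0\preceq f(x_0)$ trivially, the orbit $(f^n(x_0))_{n\in\mathbb{N}^\star}$ is constant, and hence $f$ is orbitally $\preceq$-continuous at $x_0$ because $f(x_0)=x_0$ is the supremum of the constant sequence $(f^{n+1}(x_0))_{n\in\mathbb{N}^\star}$.

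For the substantive direction $(2)\Rightarrow(1)$, I would first use monotonicity together with $x_0\preceq f(x_0)$ to conclude, by induction, that $(f^n(x_0))_{n\in\mathbb{N}^\star}$ is increasing in $(X,\preceq)$. This is exactly condition (a) in the hypothesis of Corollary \ref{orbdcop}, while condition (b), orbital $\preceq$-continuity at $x_0$, is assumed. Invoking Corollary \ref{orbdcop} then gives both $Fix(f)\neq\emptyset$ and a distinguished $x^\star\in Fix(f)$ realized as the supremum of $(f^n(x_0))_{n\in\mathbb{N}^\star}$, whence $x^\star\in\uparrow_{\preceq}x_0$. This simultaneously establishes the equivalence and the first additional assertion.

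It then remains to transfer the refined conclusions of Corollary \ref{orbmonot} to the present setting. The point is that chain completeness guarantees that $x^\star$, the supremum of the increasing orbit, exists, so condition (b) of part (2) of Corollary \ref{orbmonot} holds without being assumed separately; thus all its hypotheses are met and its extra conclusions apply. Concretely, for the bound from above I would argue that if $x_0\preceq y_0$ and $f(y_0)\preceq y_0$, then monotonicity gives $f^n(x_0)\preceq y_0$ for every $n$ by induction, so $y_0$ is an upper bound of the orbit and hence $x^\star\preceq y_0$, i.e. $x^\star\in\downarrow_{\preceq}y_0$. For minimality I would take any $y^\star\in Fix(f)\cap\uparrow_{\preceq}x_0$ and use monotonicity to get $f^n(x_0)\preceq f^n(y^\star)=y^\star$, so that $y^\star$ bounds the orbit from above and therefore $x^\star\preceq y^\star$.

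I do not expect a genuine obstacle here: the work lies entirely in correctly matching the hypotheses of Corollaries \ref{orbmonot} and \ref{orbdcop} to the current ones. The only subtlety worth flagging is that the clause ``$x^\star$ is the supremum'' is an \emph{assumed} ingredient in Corollary \ref{orbmonot}, whereas here it is \emph{produced} by chain completeness; making this explicit is what legitimates quoting the two corollaries together rather than reproving them from Theorem \ref{orbKleene}.
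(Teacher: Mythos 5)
Your proposal is correct and follows essentially the same route as the paper: the easy direction by taking $x_{0}=x^{\star}$, the substantive direction by using monotonicity to get an increasing orbit and chain completeness to supply the supremum before quoting the earlier corollaries (the paper cites Corollary \ref{orbmonot} where you cite Corollary \ref{orbdcop}, but both reduce to Theorem \ref{orbKleene}), and the two additional conclusions by the same upper-bound and minimality arguments the paper borrows from Corollary \ref{orbKleenecor}.
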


\begin{proof}$(1)\Rightarrow (2)$. It is sufficient to take $x^{\star}\in Fix(f)$ and set $x_{0}=x^{\star}$.

$(2)\Rightarrow (1)$. Since $f$ is monotone we have that the sequence $(f^n(x_0))_{n\in\mathbb{N}^\star}$ is increasing in $(X,\preceq)$. The chain completeness of $(X,\preceq)$ warranties the existence of the supremum $x^{\star}$ of $(f^n(x_0))_{n\in\mathbb{N}^\star}$ in $(X,\preceq)$ and so $x^{\star}\in \uparrow_{\preceq}x_{0}$. Besides, $x^{\star}\in Fix(f)$ by Corollary \ref{orbmonot}.

Similar arguments to those given in Corollary \ref{orbKleenecor} apply to show that $x^{\star}\in \downarrow_{\preceq}y_{0}$ provided that $y_{0}\in X$ such that $x_{0}\preceq y_{0}$ and $f(y_{0})\preceq y_{0}$ and to show that, in addition, $x^{\star}$ is the minimum of $Fix(f)\cap\uparrow_{\preceq}x_{0}$ in $(X,\preceq)$.
\end{proof}

Observe that Corollary \ref{monotdcpo} improves the celebrated Kleene fixed point theorem (see Theorem \ref{monotdcpo2}).


Let us recall that some distinguished partially ordered sets which play a central role in Computer Science are those that come from a quasi-metric space (see, for instance,  {\cite{JGL2013,SedaHit2}}). In the following we focus our attention on obtaining appropriate versions of the exposed results in those cases in which the partial order is induced by a quasi-metric. To this end, we recall a few notions about quasi-metric spaces that we will require later on.

Following \cite{Ku} (see also \cite{JGL2013}), a quasi-metric on a nonempty set $X$ is a function $d:X\times X\rightarrow $ $\mathbb{R}^{+}$ such that for
all $x,y,z\in X: $

$%
\begin{array}{ll}
\text{\textrm{(i)}} & d(x,y)=d(y,x)=0\Leftrightarrow x=y, \\
\text{\textrm{(ii)}} & d(x,z)\leq d(x,y)+d(y,z).%
\end{array}%
 $

Each quasi-metric $d$ on a set $X$ induces a $T_{0}$ topology $\tau(d) $ on\thinspace $X$ which has as a base the family of open $d$-balls $%
\{B_{d}(x,r):x\in X,$ $r>0\},$ where $B_{d}(x,r)=\{y\in X:d(x,y)<r\}$ for
all $x\in X$ and $r>0.$

A quasi-metric space is a pair $(X,d)$ such that $X$ is a nonempty set and
$d$ is a quasi-metric on $X.$

If $d$ is a quasi-metric on a set $X$, then the functions $d^{-1}$ and $d^{s}$ defined on $X\times X$ by $d^{-1}(x,y)=d(y,x)$ and $d^{s}(x,y)=\max \{d(x,y),d^{-1}(x,y)\}$ for all $x,y\in X$ are a quasi-metric and a metric on $X$, respectively.

Every quasi-metric space $(X,d)$ becomes a partially ordered set endowed with the specialization partial order $\preceq_{d}$. The specialization partial order $\preceq_{d}$ is defined on $X$ as follows: $x\preceq_{d}y \Leftrightarrow d(x,y)=0$  {(see \cite{JGL2013})}.

According to \cite{LRV2018}, a quasi-metric space $(X,d)$ is chain complete provided that the associated partially ordered set $(X,\preceq_{d})$ is chain complete. Clearly from the preceding results we get a sequence of corollaries when the partial order is assumed to be the specialization partial order coming from a quasi-metric. We only stress two of the aforementioned results, when the partial order matches up with the specialization one, because they will be of special interest later on.

\begin{corollary}\label{orbdcopCC}Let $(X,d)$ be a chain complete quasi-metric space and let $f:X\rightarrow X$ be a mapping.  Then the following are equivalent:
\begin{enumerate}

\item[(1)] $Fix(f)\neq \emptyset$.

\item[(2)]  There exists $x_{0}\in X$ such that 
\begin{enumerate}
\item The sequence $(f^n(x_0))_{n\in\mathbb{N}^\star}$ is increasing in $(X,\preceq_{d})$,
\item $f$ is orbitally $\preceq_{d}$-continuous at $x_{0}$.
\end{enumerate}
\end{enumerate}
In addition, there exists $x^{\star}\in Fix(f)$ such that $x^{\star}$ is the supremum of the sequence $(f^n(x_0))_{n\in\mathbb{N}^\star}$ and, thus, $x^{\star}\uparrow_{\preceq_{d}}x_{0}$. 
\end{corollary}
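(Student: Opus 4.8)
The plan is to recognize this corollary as a direct instantiation of Corollary \ref{orbdcop}, with the specialization partial order $\preceq_{d}$ playing the role of the abstract order $\preceq$. First I would note that every quasi-metric space $(X,d)$ carries the specialization partial order $\preceq_{d}$, defined by $x\preceq_{d}y \Leftrightarrow d(x,y)=0$, so that $(X,\preceq_{d})$ is a genuine partially ordered set. Second, invoking the definition of chain completeness for quasi-metric spaces from \cite{LRV2018}, the hypothesis that $(X,d)$ is chain complete means precisely that the associated partially ordered set $(X,\preceq_{d})$ is chain complete. No metric computation is required at this stage: chain completeness of the space is, by definition, chain completeness of the induced order.

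Having established that $(X,\preceq_{d})$ is a chain complete partially ordered set, I would apply Corollary \ref{orbdcop} verbatim with $\preceq:=\preceq_{d}$. The equivalence between $Fix(f)\neq\emptyset$ and the existence of $x_{0}\in X$ for which the sequence $(f^{n}(x_{0}))_{n\in\mathbb{N}^\star}$ is increasing in $(X,\preceq_{d})$ and $f$ is orbitally $\preceq_{d}$-continuous at $x_{0}$ then follows immediately. The additional assertion that the fixed point $x^{\star}$ may be taken to be the supremum of $(f^{n}(x_{0}))_{n\in\mathbb{N}^\star}$ in $(X,\preceq_{d})$, and hence that $x^{\star}\in\uparrow_{\preceq_{d}}x_{0}$, is likewise inherited directly from the concluding clause of Corollary \ref{orbdcop}.

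Since the result is a pure specialization, I do not anticipate any genuine obstacle to overcome. The only point meriting care is to confirm that the notions appearing in the statement—increasingness of a sequence, orbital $\preceq_{d}$-continuity at $x_{0}$, and the relevant suprema—are read off the specialization order exactly as the corresponding order-theoretic notions of Corollary \ref{orbdcop}, so that no translation between metric and order language is needed beyond the definitional identity supplied by \cite{LRV2018}. Once this identification is made explicit, the proof reduces to a single invocation of the earlier corollary.
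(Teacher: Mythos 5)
Your proposal is correct and matches the paper's own treatment exactly: the paper states this corollary without a separate proof, remarking only that it ``comes from Corollary \ref{orbdcop}'' once chain completeness of $(X,d)$ is unpacked, by definition, as chain completeness of $(X,\preceq_{d})$. Nothing further is needed.
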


Notice that the preceding result comes from Corollary \ref{orbdcop}. If in addition, we demand monotony on the mapping we obtain the next corollary which is derived from Corollary \ref{monotdcpo}.

\begin{corollary}\label{monotdcpoCC}Let $(X,d)$ be a chain complete quasi-metric space and let $f:X\rightarrow X$ be a monotone mapping. Then the following are equivalent:

\begin{enumerate}
\item[(1)] $Fix(f)\neq \emptyset$.
\item[(2)]  There exists $x_{0}\in X$ such that 
\begin{enumerate}
\item $x_{0}\preceq_{d} f(x_{0})$,
\item $f$ is orbitally $\preceq_{d}$-continuous at $x_{0}$.
\end{enumerate}
\end{enumerate}
In addition, there exist $x^{\star}\in Fix(f)$ such that $x^{\star}$ is the supremum of the sequence $(f^n(x_0))_{n\in\mathbb{N}^\star}$ and, thus, $x^{\star}\uparrow_{\preceq_{d}}x_{0}$. Moreover, $x^{\star}\in \downarrow_{\preceq_{d}}y_{0}$ provided that $y_{0}\in X$ such that $x_{0}\preceq_{d} y_{0}$ and $f(y_{0})\preceq_{d} y_{0}$. Furthermore, $x^{\star}$ is the minimum of $Fix(f)\cap\uparrow_{\preceq_{d}}x_{0}$ in $(X,\preceq_{d})$.
\end{corollary}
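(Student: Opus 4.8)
The plan is to recognize this statement as a direct specialization of Corollary \ref{monotdcpo}, in which the abstract partial order $\preceq$ is replaced by the specialization partial order $\preceq_{d}$ induced by the quasi-metric. The key bridging observation is the definition of chain completeness adopted here: following \cite{LRV2018}, saying that the quasi-metric space $(X,d)$ is chain complete means \emph{precisely} that the associated partially ordered set $(X,\preceq_{d})$ is chain complete. Hence no property of the quasi-metric $d$ beyond what is already encoded in $\preceq_{d}$ needs to be invoked, and $(X,\preceq_{d})$ is a chain complete partially ordered set on which $f$ is a monotone self-mapping.

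First I would instantiate Corollary \ref{monotdcpo} with $\preceq := \preceq_{d}$. The hypotheses of that corollary are a chain complete partially ordered set together with a monotone $f$, both of which hold here by the preceding paragraph; the two conditions (2a) $x_{0}\preceq_{d}f(x_{0})$ and (2b) orbital $\preceq_{d}$-continuity at $x_{0}$ are verbatim the conditions appearing in item (2) of the corollary we are proving. Therefore the equivalence $(1)\Leftrightarrow(2)$ transfers immediately, with $Fix(f)\neq\emptyset$ on one side and the existence of a suitable $x_{0}$ on the other.

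Finally, the supplementary assertions are read off directly from the corresponding conclusions of Corollary \ref{monotdcpo}: the existence of $x^{\star}\in Fix(f)$ realized as the supremum of $(f^{n}(x_{0}))_{n\in\mathbb{N}^\star}$ (whence $x^{\star}\in\uparrow_{\preceq_{d}}x_{0}$), the upper-bound statement $x^{\star}\in\downarrow_{\preceq_{d}}y_{0}$ whenever $x_{0}\preceq_{d}y_{0}$ and $f(y_{0})\preceq_{d}y_{0}$, and the minimality of $x^{\star}$ in $Fix(f)\cap\uparrow_{\preceq_{d}}x_{0}$. I do not anticipate any genuine obstacle; the single point requiring care is to confirm that chain completeness of $(X,d)$ is defined exactly as chain completeness of $(X,\preceq_{d})$, so that the passage from the quasi-metric setting to the order-theoretic setting of Corollary \ref{monotdcpo} is automatic and requires no separate argument about $d$ itself.
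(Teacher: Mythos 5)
Your proposal is correct and matches the paper's own treatment: the paper states that this corollary ``is derived from Corollary \ref{monotdcpo}'' by specializing the partial order to $\preceq_{d}$, exactly as you do. Your additional care in noting that chain completeness of $(X,d)$ is by definition chain completeness of $(X,\preceq_{d})$ is precisely the (only) bridging point needed.
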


It must be stressed that Corollary \ref{monotdcpoCC} improves Theorem 7 in \cite{LRV2018}, since it gives a characterization about the existence of fixed point. Notice that the aforesaid Theorem 7 only proves the implication $(2)\Rightarrow (1)$ when the self-mapping is $\preceq_{d}$-continuous. Besides, Corollary \ref{monotdcpoCC} yields information about the fixed point in the particular case in which there exists ``$y_{0}\in X$ such that $x_{0}\preceq_{d} y_{0}$ and $f(y_{0})\preceq_{d} y_{0}$'' and such an information is not provided by Theorem 7.

It seems natural to wonder whether there are a wide number of examples of chain complete quasi-metric spaces $(X,d)$, or on the contrary if it is strange to find instances of this type of spaces. The next results answer the posed question affirmative, i.e., showing that the so-called $\preceq_{d}$-complete (in the sense of \cite{LRV2018}) provide a wide class of quasi-metric spaces that satisfy the aforesaid property (see Propositions \ref{completeness} and \ref{usefullemma} below). Before introducing the announced result let us recall that a quasi-metric space $(X,d)$ is $\preceq_{d}$-complete provided that each increasing sequence $(x_{n})_{n\in\mathbb{N}}$ in $(X,\preceq_{d})$ converges with respect to $\tau(d^{s})$.

In view of the above introduced notion we show that there are a wide class of quasi-metric spaces which are $\preceq_{d}$-complete. To this end, let us recall a few appropriate notions of completeness that arise in a natural way in the quasi-metric framework.

According to \cite{Reilly}, a sequence $(x_{n})_{n\in\mathbb{N}}$ in a quasi-metric space $(X,d)$ is said to be right (left) K-Cauchy if, given $\varepsilon>0$, there exists $n_{0}\in\mathbb{N}$ such that $d(x_{m},x_{n})<\varepsilon$ ($d(x_{n},x_{m})<\varepsilon$) for all $m\geq n\geq n_{0}$. A quasi-metric space $(X,d)$ is said to be right K-sequentially complete provided that every right K-Cauchy sequence converges with respect to $\tau(d)$. Following \cite{Cobzas} (see also \cite{KSch}), a quasi-metric space $(X,d)$ is left (right) Smyth complete provided that every left (right) K-Cauchy sequence converges with respect to $\tau(d^{s})$. On account of \cite{Ma}, a quasi-metric space $(X,d)$ is called weightable provided the existence of a function $w_{d}:X\rightarrow \mathbb{R}^{+}$ such that 
$$d(x,y)+w_{d}(x)=d(y,x)+w_{d}(y)$$ for all $x,y\in X$. Finally, a quasi-metric space $(X,d)$ is said to be bicomplete if the induced metric space $(X,d^{s})$ is complete (see, for instance, \cite{Ku}).

Next we show that all preceding classes of ``complete'' quasi-metric spaces are instances of $\preceq_{d}$-complete quasi-metric spaces. To this end, we count with the help of Lemma \ref{least} whose proof we omit because it was given in \cite{LRV2018}.

\begin{lemma}\label{least}Let $(X,d)$ be a quasi-metric space. If $x\in X$ and $(x_{n})_{n\in\mathbb{N}}$ is an increasing sequence in $(X,\preceq_{d})$ which converges to $x$ with respect to $\tau(d^{s})$, then $x$ is the supremum of $(x_{n})_{n\in\mathbb{N}}$ in $(X,\preceq_{d})$. \end{lemma}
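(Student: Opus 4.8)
The plan is to verify directly the two defining properties of a supremum in $(X,\preceq_{d})$: that $x$ is an upper bound of the sequence, and that it is the least such. Throughout I will use that convergence of $(x_{n})_{n\in\mathbb{N}}$ to $x$ with respect to $\tau(d^{s})$ is equivalent to $\lim_{n\to\infty}d^{s}(x_{n},x)=0$, which, since $d^{s}=\max\{d,d^{-1}\}$, splits into the two one-sided conditions $\lim_{n\to\infty}d(x_{n},x)=0$ and $\lim_{n\to\infty}d(x,x_{n})=0$. The recurring tool will be the triangle inequality for $d$ together with the characterization $z\preceq_{d}w\Leftrightarrow d(z,w)=0$.

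First I would show that $x$ is an upper bound, i.e. that $d(x_{n},x)=0$ for every $n\in\mathbb{N}$. Fixing $n$ and using that the sequence is increasing, a repeated application of the triangle inequality gives $d(x_{n},x_{m})=0$ for all $m\geq n$, since each consecutive term $d(x_{k},x_{k+1})$ vanishes because $x_{k}\preceq_{d}x_{k+1}$. Hence, for every $m\geq n$, the triangle inequality yields $d(x_{n},x)\leq d(x_{n},x_{m})+d(x_{m},x)=d(x_{m},x)$. Letting $m\to\infty$ and invoking $\lim_{m\to\infty}d(x_{m},x)=0$ forces $d(x_{n},x)=0$, so $x_{n}\preceq_{d}x$.

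Next I would prove minimality among upper bounds. Let $y\in X$ be any upper bound of $(x_{n})_{n\in\mathbb{N}}$, so that $d(x_{n},y)=0$ for all $n$. The triangle inequality then gives $d(x,y)\leq d(x,x_{n})+d(x_{n},y)=d(x,x_{n})$ for every $n$, and letting $n\to\infty$ with $\lim_{n\to\infty}d(x,x_{n})=0$ yields $d(x,y)=0$, that is, $x\preceq_{d}y$. Combining the two parts shows that $x$ is the supremum of $(x_{n})_{n\in\mathbb{N}}$ in $(X,\preceq_{d})$.

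The computations here are elementary, so the only genuine subtlety, and the point I would be most careful about, is the asymmetry of the quasi-metric: the upper bound property relies on the forward convergence $d(x_{m},x)\to 0$, whereas minimality relies on the backward convergence $d(x,x_{n})\to 0$. It is precisely because $\tau(d^{s})$-convergence supplies both one-sided limits simultaneously that the argument closes; a weaker notion of convergence controlling only one of $d(x_{n},x)$ or $d(x,x_{n})$ would not be enough to obtain both conclusions.
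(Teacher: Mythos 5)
Your proof is correct. The paper itself omits the proof of this lemma, deferring to the cited reference \cite{LRV2018}, so there is no in-paper argument to compare against; your verification of the two defining properties of the supremum via the triangle inequality and the characterization $z\preceq_{d}w\Leftrightarrow d(z,w)=0$ is the standard and natural route, and your closing remark correctly isolates why full $\tau(d^{s})$-convergence (rather than one-sided $\tau(d)$- or $\tau(d^{-1})$-convergence) is needed.
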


\begin{proposition}\label{completeness}Let $(X,d)$ be a quasi-metric space such that one of the following assertions holds:
\begin{enumerate}
\item $(X,d)$ is left Smyth complete,
\item $(X,d^{-1})$ is right Smyth complete,
\item $(X,d)$ is weightable and bicomplete.
\end{enumerate} Then $(X,d)$ is $\preceq_{d}$-complete.
\end{proposition}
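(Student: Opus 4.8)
The plan is to reduce all three cases to a single elementary observation about increasing sequences and then feed that observation into the appropriate completeness hypothesis. So let $(x_n)_{n\in\mathbb{N}}$ be an increasing sequence in $(X,\preceq_d)$; by the definition of the specialization order this means $d(x_n,x_{n+1})=0$ for every $n$. First I would show, by a routine induction using the triangle inequality, that in fact $d(x_n,x_m)=0$ whenever $m\geq n$. This is the engine of the whole argument: the ``forward'' distances along the sequence all vanish, and only the ``backward'' distances $d(x_m,x_n)$ can be positive. Recall that to conclude $\preceq_d$-completeness it suffices to prove that $(x_n)_{n\in\mathbb{N}}$ converges with respect to $\tau(d^s)$, so convergence, not a supremum, is all that is at stake.

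With this in hand, cases (1) and (2) are immediate. For case (1), the identity $d(x_n,x_m)=0$ for $m\geq n$ says precisely that $(x_n)_{n\in\mathbb{N}}$ is left K-Cauchy, so left Smyth completeness of $(X,d)$ directly hands us convergence with respect to $\tau(d^s)$. For case (2) I would reinterpret the same vanishing distances in the reversed quasi-metric: since $d^{-1}(x_m,x_n)=d(x_n,x_m)=0$ for $m\geq n$, the sequence is right K-Cauchy in $(X,d^{-1})$. The only point to check here is that $(d^{-1})^s=d^s$, which follows directly from the definitions of $d^{-1}$ and $d^s$; hence right Smyth completeness of $(X,d^{-1})$ yields convergence with respect to $\tau((d^{-1})^s)=\tau(d^s)$, as required.

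Case (3) is where the real content lies, and I expect it to be the main obstacle. Here bicompleteness gives convergence of any $d^s$-Cauchy sequence, so the task is to upgrade the one-sided information $d(x_n,x_m)=0$ to genuine $d^s$-Cauchyness, and this is exactly what weightability buys us. Applying the weight identity with $x=x_n$ and $y=x_m$ for $m\geq n$, I would compute $d(x_m,x_n)=d(x_n,x_m)+w_d(x_n)-w_d(x_m)=w_d(x_n)-w_d(x_m)$. Since the left-hand side is nonnegative, the real sequence $(w_d(x_n))_{n\in\mathbb{N}}$ is nonincreasing and bounded below by $0$, hence convergent and therefore Cauchy in $\mathbb{R}$. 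Consequently $d^s(x_n,x_m)=\max\{d(x_n,x_m),d(x_m,x_n)\}=w_d(x_n)-w_d(x_m)$ can be made arbitrarily small for large $n,m$, so $(x_n)_{n\in\mathbb{N}}$ is $d^s$-Cauchy; bicompleteness of $(X,d)$ then delivers the desired $\tau(d^s)$-convergence. The delicate step throughout is the bookkeeping of the directions of the K-Cauchy conditions and of the weight difference, since a sign error would break the monotonicity argument that makes the weight sequence converge.
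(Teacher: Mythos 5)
Your proof is correct. For cases (1) and (2) you follow essentially the same route as the paper: the key observation that an increasing sequence satisfies $d(x_n,x_m)=0$ for all $m\geq n$ makes it left K-Cauchy in $(X,d)$ and right K-Cauchy in $(X,d^{-1})$, and the respective Smyth completeness hypotheses then deliver $\tau(d^s)$-convergence. (You are also right, and more careful than the paper, in noting that one must check $(d^{-1})^s=d^s$ in case (2), and that the definition of $\preceq_d$-completeness only demands convergence, not the existence of a supremum --- the paper's proof additionally invokes its Lemma~\ref{least} to identify the limit as the supremum, which is harmless but not needed here.) Where you genuinely diverge is case (3): the paper dispatches it in one line by citing K\"unzi's result that every weightable bicomplete quasi-metric space is left Smyth complete, thereby reducing it to case (1), whereas you give a direct, self-contained argument: the weight identity forces $d(x_m,x_n)=w_d(x_n)-w_d(x_m)$ for $m\geq n$, so $(w_d(x_n))_{n\in\mathbb{N}}$ is nonincreasing and bounded below, hence Cauchy, hence the sequence is $d^s$-Cauchy and bicompleteness applies. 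Your version costs a few extra lines but buys independence from the cited literature and makes explicit exactly what weightability contributes; the paper's version is shorter but rests on an external theorem. Both are valid.
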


\begin{proof}\begin{enumerate}
\item Let $(x_{n})_{n\in\mathbb{N}}$ be an increasing sequence in $(X,\preceq_{d})$. Then there exists $n_{0}\in\mathbb{N}$ such that $d(x_{n},x_{m})=0$ for all $m\geq n\geq n_{0}$. Thus $d(x_{n},x_{m})=0$ for all $m\geq n\geq n_{0}$. It follows that the sequence $(x_{n})_{n\in\mathbb{N}}$ is left K-Cauchy in $(X,d)$. Since the quasi-metric space $(X,d)$ is left Smyth complete we deduce the existence of $x\in X$ such that $(x_{n})_{n\in\mathbb{N}}$ converges to $x$ with respect to $\tau(d^{s})$. By Lemma \ref{least} we obtain that $x$ is the supremum of $(x_{n})_{n\in\mathbb{N}}$ in $(X,\preceq_{d})$. 

\item Let $(x_{n})_{n\in\mathbb{N}}$ be an increasing sequence in $(X,\preceq_{d})$. Then there exists $n_{0}\in\mathbb{N}$ such that $d(x_{n},x_{m})=0$ for all $m\geq n\geq n_{0}$. Hence we have that $d^{-1}(x_{m},x_{n})=0$ for all $m\geq n\geq n_{0}$. Since the quasi-metric space $(X,d^{-1})$ is right Smyth complete we deduce the existence of $x\in X$ such that $(x_{n})_{n\in\mathbb{N}}$ converges to $x$ with respect to $\tau(d^{s})$. By Lemma \ref{least} we obtain that $x$ is the supremum of $(x_{n})_{n\in\mathbb{N}}$ in $(X,\preceq_{d})$.

\item  On account of \cite{Ku}, every weightable bicomplete quasi-metric space is always left Smyth complete. 
\end{enumerate}

\end{proof}

The following result states that every quasi-metric, which is complete in any sense of Proposition \ref{completeness}, is chain complete. 

\begin{proposition}\label{usefullemma}Let $(X,d)$ be a $\preceq_{d}$-complete quasi-metric space. Then $(X,\preceq_{d})$ is chain complete.
\end{proposition}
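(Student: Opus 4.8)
The plan is to reduce chain completeness directly to the two ingredients already at hand: the defining convergence property of $\preceq_{d}$-completeness and Lemma \ref{least}. First I would fix an arbitrary increasing sequence $(x_{n})_{n\in\mathbb{N}^\star}$ in $(X,\preceq_{d})$; by the very definition of chain completeness, the entire task is to exhibit its supremum in $(X,\preceq_{d})$.

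Next, since $(X,d)$ is assumed $\preceq_{d}$-complete, its defining property applies to this increasing sequence and yields immediately a point $x\in X$ to which the sequence converges with respect to the topology $\tau(d^{s})$. At this stage $x$ is only a $\tau(d^{s})$-limit; it is not yet known to be an upper bound, let alone a least upper bound, in the order $\preceq_{d}$.

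The bridge from this topological limit to an order-theoretic supremum is precisely Lemma \ref{least}. Applying it to the increasing sequence $(x_{n})_{n\in\mathbb{N}^\star}$ and its limit $x$, I obtain that $x$ is the supremum of $(x_{n})_{n\in\mathbb{N}^\star}$ in $(X,\preceq_{d})$. As the chosen sequence was arbitrary, every increasing sequence in $(X,\preceq_{d})$ admits a supremum, which is exactly the statement that $(X,\preceq_{d})$ is chain complete.

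Because Lemma \ref{least} is used as a black box and the $\tau(d^{s})$-convergence is handed over verbatim by the hypothesis, there is no substantial obstacle in this argument. The only point deserving a word of care is the harmless reindexing between the $\mathbb{N}$-indexed sequences appearing in the definition of $\preceq_{d}$-completeness and in Lemma \ref{least}, and the $\mathbb{N}^\star$-indexed increasing sequences underlying the definition of chain completeness; adding or deleting the single index $0$ affects neither the convergence nor the supremum, so it does not interfere with the conclusion.
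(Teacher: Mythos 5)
Your argument is correct and coincides with the paper's own proof: both take an arbitrary increasing sequence, invoke $\preceq_{d}$-completeness to obtain a $\tau(d^{s})$-limit, and then apply Lemma \ref{least} to upgrade that limit to the supremum in $(X,\preceq_{d})$. Your remark on the harmless reindexing between $\mathbb{N}$ and $\mathbb{N}^\star$ is a fine point the paper passes over silently, but it changes nothing.
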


\begin{proof}Let $(x_{n})_{n\in\mathbb{N}}$ be an increasing sequence in $(X,\preceq_{d})$. Since the quasi-metric space $(X,d)$ is $\preceq_{d}$-complete we have that there exists $x\in X$ such that $(x_{n})_{n\in\mathbb{N}}$ converges to $x$ with respect to $\tau(d^{s})$. By Lemma \ref{least} we deduce that $x$ is the supremum of $(x_{n})_{n\in\mathbb{N}}$ in $(X,\preceq_{d})$. It follows that $(X,\preceq_{d})$ is chain complete. \end{proof}

From Corollary \ref{orbdcopCC} we deduce the next two results.

\begin{corollary}\label{apply1}Let $(X,d)$ be a $\preceq_{d}$-complete quasi-metric space and let $f:X\rightarrow X$ be a mapping. Then the following are equivalent:

\begin{enumerate}

\item[(1)] $Fix(f)\neq \emptyset$.

\item[(2)]  There exists $x_{0}\in X$ such that 
\begin{enumerate}
\item The sequence $(f^n(x_0))_{n\in\mathbb{N}^\star}$ is increasing in $(X,\preceq_{d})$,
\item $f$ is orbitally $\preceq_{d}$-continuous at $x_{0}$.
\end{enumerate}
\end{enumerate}
In addition, there exists $x^{\star}\in Fix(f)$ such that $x^{\star}$ is the supremum of the sequence $(f^n(x_0))_{n\in\mathbb{N}^\star}$ and, thus, $x^{\star}\uparrow_{\preceq_{d}}x_{0}$. 
\end{corollary}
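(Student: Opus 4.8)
The plan is to reduce the statement entirely to Corollary \ref{orbdcopCC}, whose conclusion is word-for-word identical to the one asserted here, the sole difference being that Corollary \ref{orbdcopCC} is phrased for chain complete quasi-metric spaces whereas the present hypothesis is $\preceq_{d}$-completeness. The only gap to bridge is therefore the passage from $\preceq_{d}$-completeness to chain completeness, and this is precisely what Proposition \ref{usefullemma} supplies.

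First I would invoke Proposition \ref{usefullemma}: since $(X,d)$ is $\preceq_{d}$-complete, the associated partially ordered set $(X,\preceq_{d})$ is chain complete. By the definition borrowed from \cite{LRV2018}, this is exactly the assertion that $(X,d)$ is a chain complete quasi-metric space, so the hypotheses of Corollary \ref{orbdcopCC} are met without any further argument.

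Second, I would apply Corollary \ref{orbdcopCC} directly to $(X,d)$ and $f$. This returns the equivalence of $(1)$ and $(2)$ verbatim, together with the supplementary claim that there exists $x^{\star}\in Fix(f)$ which is the supremum of $(f^{n}(x_{0}))_{n\in\mathbb{N}^\star}$ and hence satisfies $x^{\star}\in\uparrow_{\preceq_{d}}x_{0}$. Nothing beyond this substitution is needed.

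The only point demanding any care, and it is a slight one, is to confirm that the notion of chain completeness appearing in Corollary \ref{orbdcopCC} coincides with the one delivered by Proposition \ref{usefullemma}, namely that ``chain complete quasi-metric space'' means precisely that $(X,\preceq_{d})$ is chain complete. Since this is the very definition recalled from \cite{LRV2018} immediately before Corollary \ref{orbdcopCC}, there is in fact no genuine obstacle: the result is an immediate corollary, and all the substantive work was already carried out in establishing Proposition \ref{usefullemma} and Corollary \ref{orbdcopCC}.
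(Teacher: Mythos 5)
Your proposal is correct and coincides exactly with the paper's own proof: Proposition \ref{usefullemma} converts $\preceq_{d}$-completeness into chain completeness of $(X,\preceq_{d})$, and then Corollary \ref{orbdcopCC} is applied verbatim. Nothing further is needed.
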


\begin{proof}By Proposition \ref{usefullemma} we have that the partially ordered set $(X,\preceq_{d})$ is chain complete. Applying Corollary \ref{orbdcopCC}  we obtain the desired conclusions. 

\end{proof}

\begin{corollary}\label{completeness1}Let $(X,d)$ be a quasi-metric space such that one of the following assertions holds:
\begin{enumerate}
\item $(X,d)$ is left Smyth complete,
\item $(X,d^{-1})$ is right Smyth complete,
\item $(X,d)$ is weightable and bicomplete.
\end{enumerate} Let $f:X\rightarrow X$ be a mapping. Then the following are equivalent:

\begin{enumerate}

\item[(1)] $Fix(f)\neq \emptyset$.

\item[(2)]  There exists $x_{0}\in X$ such that 
\begin{enumerate}
\item The sequence $(f^n(x_0))_{n\in\mathbb{N}^\star}$ is increasing in $(X,\preceq_{d})$,
\item $f$ is orbitally $\preceq_{d}$-continuous at $x_{0}$.
\end{enumerate}
\end{enumerate}
In addition, there exists $x^{\star}\in Fix(f)$ such that $x^{\star}$ is the supremum of the sequence $(f^n(x_0))_{n\in\mathbb{N}^\star}$ and, thus, $x^{\star}\uparrow_{\preceq_{d}}x_{0}$.
\end{corollary}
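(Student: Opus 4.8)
The plan is to reduce this statement entirely to Corollary \ref{apply1} by first establishing that each of the three listed hypotheses forces $(X,d)$ to be $\preceq_{d}$-complete. The first step is to invoke Proposition \ref{completeness}, which asserts precisely that a quasi-metric space satisfying any one of the conditions (1), (2) or (3) is $\preceq_{d}$-complete. Since the three assertions appearing in the hypothesis here coincide verbatim with the three assertions in the statement of Proposition \ref{completeness}, this step is immediate and requires no computation.

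Once $\preceq_{d}$-completeness has been secured, I would simply apply Corollary \ref{apply1}. That corollary is stated for an arbitrary $\preceq_{d}$-complete quasi-metric space $(X,d)$ together with a mapping $f:X\rightarrow X$, and it delivers both the equivalence of $(1)$ and $(2)$ and the supplementary conclusion that the fixed point $x^{\star}$ is the supremum of the orbit $(f^{n}(x_{0}))_{n\in\mathbb{N}^\star}$ with $x^{\star}\uparrow_{\preceq_{d}}x_{0}$. These coincide exactly with the conclusions demanded in Corollary \ref{completeness1}, so no additional argument is needed; the orbital $\preceq_{d}$-continuity hypothesis in item (2b) is transported unchanged from one statement to the other.

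Because every ingredient has already been proved, there is no substantive obstacle in this argument: the result is a concatenation of Proposition \ref{completeness} (which manufactures $\preceq_{d}$-completeness out of the metric-type completeness notions) with Corollary \ref{apply1} (which extracts the fixed point characterization from $\preceq_{d}$-completeness). The only point deserving a moment's attention is bookkeeping, namely verifying that the three completeness notions in the hypothesis are matched to the correct items of Proposition \ref{completeness}; for cases (1) and (2) this is direct, while case (3) passes through the classical fact, already used in Proposition \ref{completeness}, that a weightable bicomplete quasi-metric space is left Smyth complete.
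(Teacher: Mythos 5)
Your proposal is correct and follows exactly the route the paper intends: the paper states Corollary \ref{completeness1} without a written proof precisely because it is the immediate concatenation of Proposition \ref{completeness} (each of the three completeness conditions yields $\preceq_{d}$-completeness) with Corollary \ref{apply1}. Nothing is missing.
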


From Corollaries \ref{monotdcpoCC} and \ref{completeness1} we derive the next two results that will play a central role in our subsequent discussion.

\begin{corollary}\label{apply2}Let $(X,d)$ be a $\preceq_{d}$-complete quasi-metric space and let $f:X\rightarrow X$ be a monotone mapping. Then the following are equivalent:

\begin{enumerate}
\item[(1)] $Fix(f)\neq \emptyset$.
\item[(2)]  There exists $x_{0}\in X$ such that 
\begin{enumerate}
\item $x_{0}\preceq_{d} f(x_{0})$,
\item $f$ is orbitally $\preceq_{d}$-continuous at $x_{0}$.
\end{enumerate}
\end{enumerate}
In addition, there exists $x^{\star}\in Fix(f)$ such that $x^{\star}$ is the supremum of the sequence $(f^n(x_0))_{n\in\mathbb{N}^\star}$ and, thus, $x^{\star}\uparrow_{\preceq_{d}}x_{0}$. Moreover, $x^{\star}\in \downarrow_{\preceq_{d}}y_{0}$ provided that $y_{0}\in X$ such that $x_{0}\preceq_{d} y_{0}$ and $f(y_{0})\preceq_{d} y_{0}$. Furthermore, $x^{\star}$ is the minimum of $Fix(f)\cap\uparrow_{\preceq_{d}}x_{0}$ in $(X,\preceq_{d})$.
\end{corollary}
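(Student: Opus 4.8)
The plan is to reduce this corollary directly to its chain-complete counterpart, Corollary \ref{monotdcpoCC}, by invoking the bridge between the two completeness notions supplied earlier. Observe that Corollary \ref{apply2} differs from Corollary \ref{monotdcpoCC} only in that the standing hypothesis ``chain complete partially ordered set'' has been replaced by ``$\preceq_{d}$-complete quasi-metric space''; the mapping $f$ is monotone in both, conditions (2a) and (2b) are identical once $\preceq$ is read as $\preceq_{d}$, and the ``in addition'' clauses coincide termwise. Consequently the whole task amounts to upgrading the quasi-metric completeness hypothesis to chain completeness of $(X,\preceq_{d})$, after which every assertion transfers verbatim.

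First I would apply Proposition \ref{usefullemma}, which states precisely that if $(X,d)$ is a $\preceq_{d}$-complete quasi-metric space, then the associated partially ordered set $(X,\preceq_{d})$ is chain complete. Thus $(X,\preceq_{d})$ meets the hypothesis of Corollary \ref{monotdcpoCC}, while $f$ remains a monotone self-mapping on this now chain complete ordered set.

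With chain completeness secured, I would invoke Corollary \ref{monotdcpoCC} directly. It delivers the equivalence of $(1)$ and $(2)$, and its ``in addition'' part produces the fixed point $x^{\star}$ as the supremum of $(f^{n}(x_{0}))_{n\in\mathbb{N}^{\star}}$ (whence $x^{\star}\in\uparrow_{\preceq_{d}}x_{0}$), the membership $x^{\star}\in\downarrow_{\preceq_{d}}y_{0}$ under the stated condition $x_{0}\preceq_{d}y_{0}$ and $f(y_{0})\preceq_{d}y_{0}$, and the minimality of $x^{\star}$ in $Fix(f)\cap\uparrow_{\preceq_{d}}x_{0}$. Each of these is exactly the corresponding conclusion of Corollary \ref{apply2}, so no further verification is required.

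There is essentially no obstacle here: the result is a direct specialization of Corollary \ref{monotdcpoCC} through Proposition \ref{usefullemma}, exactly mirroring the short proof already recorded for Corollary \ref{apply1}, which reduces Corollary \ref{orbdcopCC} in the same fashion. The only point meriting a moment's care is to confirm that the several ``in addition'' statements of Corollary \ref{monotdcpoCC} match those demanded in Corollary \ref{apply2} clause for clause, which they plainly do under the identification of the specialization order $\preceq_{d}$ with the abstract order $\preceq$.
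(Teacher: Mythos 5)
Your proposal is correct and matches the paper's own proof exactly: the paper likewise applies Proposition \ref{usefullemma} to upgrade $\preceq_{d}$-completeness to chain completeness of $(X,\preceq_{d})$ and then invokes Corollary \ref{monotdcpoCC} to obtain all the conclusions.
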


\begin{proof}By Proposition \ref{usefullemma} we have that the partially ordered set $(X,\preceq_{d})$ is chain complete. Applying Corollary \ref{monotdcpoCC} we obtain the desired conclusions. 

\end{proof}

\begin{corollary}\label{completeness2}Let $(X,d)$ be a quasi-metric space such that one of the following assertions holds:
\begin{enumerate}
\item $(X,d)$ is left Smyth complete,
\item $(X,d^{-1})$ is right Smyth complete,
\item $(X,d)$ is weightable and bicomplete.
\end{enumerate} Let $f:X\rightarrow X$ be a monotone mapping. Then the following are equivalent:

\begin{enumerate}
\item[(1)] $Fix(f)\neq \emptyset$.
\item[(2)]  There exists $x_{0}\in X$ such that 
\begin{enumerate}
\item $x_{0}\preceq_{d} f(x_{0})$,
\item $f$ is orbitally $\preceq_{d}$-continuous at $x_{0}$.
\end{enumerate}
\end{enumerate}
In addition, there exists $x^{\star}\in Fix(f)$ such that $x^{\star}$ is the supremum of the sequence $(f^n(x_0))_{n\in\mathbb{N}^\star}$ and, thus, $x^{\star}\uparrow_{\preceq_{d}}x_{0}$. Moreover, $x^{\star}\in \downarrow_{\preceq_{d}}y_{0}$ provided that $y_{0}\in X$ such that $x_{0}\preceq_{d} y_{0}$ and $f(y_{0})\preceq_{d} y_{0}$. Furthermore, $x^{\star}$ is the minimum of $Fix(f)\cap\uparrow_{\preceq_{d}}x_{0}$ in $(X,\preceq_{d})$.
\end{corollary}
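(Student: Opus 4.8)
The plan is to reduce the statement to the two results already established, namely Proposition \ref{completeness} and Corollary \ref{apply2}, so that no fresh argument about increasing sequences or suprema is needed. First I would observe that the three hypotheses listed here --- that $(X,d)$ is left Smyth complete, that $(X,d^{-1})$ is right Smyth complete, or that $(X,d)$ is weightable and bicomplete --- are precisely the three assertions appearing in the hypothesis of Proposition \ref{completeness}. Hence, under any one of them, Proposition \ref{completeness} guarantees that $(X,d)$ is $\preceq_{d}$-complete.

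Once $\preceq_{d}$-completeness is secured, I would simply note that the mapping $f$ is assumed to be monotone, so the pair $\left((X,d),f\right)$ satisfies exactly the hypotheses of Corollary \ref{apply2}. Applying that corollary then delivers the whole conclusion at once: the equivalence between $Fix(f)\neq\emptyset$ and the existence of $x_{0}\in X$ with $x_{0}\preceq_{d}f(x_{0})$ and $f$ orbitally $\preceq_{d}$-continuous at $x_{0}$, together with all the additional assertions, that is, the existence of $x^{\star}\in Fix(f)$ which is the supremum of $(f^{n}(x_{0}))_{n\in\mathbb{N}^\star}$ (whence $x^{\star}\in\uparrow_{\preceq_{d}}x_{0}$), the fact that $x^{\star}\in\downarrow_{\preceq_{d}}y_{0}$ whenever $x_{0}\preceq_{d}y_{0}$ and $f(y_{0})\preceq_{d}y_{0}$, and the minimality of $x^{\star}$ in $Fix(f)\cap\uparrow_{\preceq_{d}}x_{0}$.

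I do not anticipate any genuine obstacle, since the argument is a straightforward chaining of the two cited results; the only point deserving a moment's care is to confirm that the list of three completeness conditions coincides verbatim with that of Proposition \ref{completeness}, and that the monotonicity hypothesis on $f$ is transmitted unchanged into Corollary \ref{apply2}. Both checks are immediate, so in effect a one-line proof suffices: by Proposition \ref{completeness} the space $(X,d)$ is $\preceq_{d}$-complete, and the conclusion then follows directly from Corollary \ref{apply2}.
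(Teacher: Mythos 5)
Your proof is correct and follows essentially the same route as the paper: the paper leaves this corollary without an explicit proof, deriving it from the preceding results, and your chain (Proposition \ref{completeness} gives $\preceq_{d}$-completeness, then Corollary \ref{apply2} — itself proved via Proposition \ref{usefullemma} and Corollary \ref{monotdcpoCC} — delivers all the conclusions) is exactly that intended derivation.
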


\section{The application}

In 1995, M.P. Schellekens developed a new mathematical method to provide the asymptotic upper bounds of those algorithms whose running time of computing satisfies a recurrence equation (see \cite{Schellekens}). This method is based on the use of the so-called complexity space. Let us recall that the complexity space is the quasi-metric space $(\mathcal{C},d_{\mathcal{C}})$ where 

$$\mathcal{C}=\{f:\mathbb{N}\rightarrow \mathbb{R}^{+}: \sum_{n=1}^{\infty} 2^{-n}f(n)<\infty\}$$ and the quasi-metric $d_{\mathcal{C}}$ is given by 
$$d_{\mathcal{C}}(f,g)=\sum_{n=1}^{\infty} 2^{-n}\left(\max\left(\frac{1}{g(n)}-\frac{1}{f(n)},0 \right)\right).$$ 

On account of \cite{Schellekens}, each algorithm $A$ can be associated to a function $f_{A}\in \mathcal{C}$ such that $f_{A}(n)$ represents the time taken by $A$ to solve the problem for which $A$ has been designed when the size of input data is $n\in\mathbb{N}$. The mappings belonging to $\mathcal{C}$ were called complexity functions in \cite{Schellekens}.

Observe that the condition ``$\sum_{n=1}^{\infty} 2^{-n}f(n)<\infty$'' which is used to define $\mathcal{C}$ is not restrictive, since it is held by every computable algorithm, i.e., it is fulfilled by all algorithms $B$ with $f_{B}(n)\leq 2^{n}$ for all $n\in \mathbb{N}$. Moreover, the value $d_{\mathcal{C}}(f_{A},f_{B})$ can be understood as the relative progress made in lowering the complexity by replacing any algorithm $A$ with complexity function $f_{A}$ by any algorithms $B$ with complexity function $f_{B}$. Thus, the condition $d_{\mathcal{C}}(f_{A},f_{B})=0$ (or, equivalently, $f_{A}\preceq_{d_{\mathcal{C}}} f_{B}$) can be interpreted as the algorithm $A$ is at least as efficient as the algorithm $B$, since $d_{\mathcal{C}}(f_{A},f_{B})=0 \Leftrightarrow f_{A}(n)\leq f_{B}(n)$ for all $n\in\mathbb{N}$.

Notice that, given $g\in\mathcal{C}$, $d_{\mathcal{C}}(f_{A},g)=0$ implies that $f_{A}\in \mathcal{O}(g)$, where $\mathcal{O}(g)=\{f\in\mathcal{C}:\mbox{ there exists } c\in \mathbb{R}^{+} \mbox{ and } n_{0}\in \mathbb{N} \mbox{ with }	f_{A}(n)\leq c g(n) $ for all $ n\geq n_{0}\}$.  According to \cite{BB1988}, when the precise information about the running time of computing $f_{A}$ of an algorithm $A$ is not known, the fact that $f_{A}\in \mathcal{O}(g)$ yields an asymptotic upper bound of the time taken by $A$ in order to solve the problem under consideration. It must be stressed that the condition $d_{\mathcal{C}}(g,f_{A})=0$ can be also interpreted as $f_{A}\in \Omega(g)$, where $\Omega(g)=\{f\in\mathcal{C}:\mbox{ there exists } c\in \mathbb{R}^{+} \mbox{ and } n_{0}\in \mathbb{N} \mbox{ with }	cg(n)\leq f(n) \mbox{ for all } n\geq n_{0}\}$. Of course, from a computational viewpoint the fact that $f_{A}\in \Omega(g)$ provides that the mapping $g$ gives an asymptotic lower bound of the running time of computing of the algorithm $A$.

Observe that the asymmetry of $d_{\mathcal{C}}$ plays a central role in order to provide information about the increase in complexity whenever an algorithm is replaced by another one. Clearly, a metric would be able to yield information on the increase but it, however, will not reveal which algorithm is more efficient.

The utility of the complexity space $(\mathcal{C},d_{\mathcal{C}})$ was shown by Schellekens in \cite{Schellekens}, where he gave an alternative proof of the fact that the Mergesort has optimal asymptotic average running time of computing, i.e., $f_{M}\in \mathcal{O}(f_{\log})\cap\Omega(f_{\log})$, where $f_{M}$ represents the running time of the Mergesort and $f_{\log}\in \mathcal{C}$ such that $f_{log}(1)=c$ ($c\in \mathbb{R}^{+}$) and $f_{log}(n)=n\log_{2}(n)$ for all $n\in\mathbb{N}$ with $n>1$.  To achieve the mentioned target, Schellekens developed a technique based on the use of the celebrated Banach fixed point theorem. The aforesaid fixed point technique was applied to analyze those algorithms whose running time of computing satisfies a Divide and Conquer recurrence equation. Let us recall briefly that a Divide and Conquer recurrence equation is given as follows (see \cite{BB1988,Schellekens} for a detailed discussion):

\begin{equation}
T(n)=\left\{\begin{array}{ll}
c& \text{ if } n=1,\\
aT(\frac{n}{b})+h(n)& \text{ if } n\in \mathbb{N}_{b},\\ 
\end{array}\right. \label{DCR} 
\end{equation}  where $\mathbb{N}_{b}=\{b^{k}:k\in \mathbb{N}\}$, $c\in \mathbb{R}^{+}$, $a,b\in \mathbb{N}$ with $a,b>1$ and $h\in \mathcal{C}$ with $h(n)<\infty$ for all $n\in \mathbb{N}$.

Set $\mathcal{C}_{b,c}=\{f\in \mathcal{C}: f(1)=c \text{ and } f(n)=\infty \text{ for all } n\in \mathbb{N}\setminus \mathbb{N}_{b} \text{ with } n>1\}$. It is clear that a mapping $f\in \mathcal{C}_{b,c}$ is a solution to the recurrence equation (\ref{DCR}) if and only if $f$ is a fixed point of the mapping $\Phi_{T}:\mathcal{C}_{b,c}\rightarrow \mathcal{C}_{b,c}$ associated with the recurrence equation (\ref{DCR}) and given by 

\begin{equation}
\Phi_{T}(f)(n)=\left\{
\begin{array}{ll}
c & \text{\textrm{if }}n=1, \\
af(\frac{n}{b})+h(n) & \text{\textrm{if }} n\in \mathbb{N}_{b}, \\
\infty & \text{ otherwise} ,
\end{array}
\right.  \label{funct1}
\end{equation} for all $f\in \mathcal{C}_{b,c}.$

Concretely, the fixed point technique introduced by Schellekens is given by the following result:
\begin{theorem}\label{SchT}The quasi-metric space $(\mathcal{C}_{b,c},d_{\mathcal{C}})$ is left Smyth complete and the mapping $\Phi_{T}$ satisfies that $d_{\mathcal{C}}(\Phi_{T}(f),\Phi_{T}(g))\leq \frac{1}{2} d_{\mathcal{C}}(f,g)$ for all $f,g\in \mathcal{C}_{b,c}$. Thus, a Divide and Conquer recurrence of the form (\ref{DCR}) has a unique solution $f_{T}\in \mathcal{C}_{b,c}$. Moreover, the following assertions hold:
\begin{enumerate}
\item If there exists $g\in \mathcal{C}_{b,c}$ such that $g\preceq_{d_{\mathcal{C}}} \Phi_{T}(g)$, then $f_{T}\in \Omega(g)$.
\item If there exists $g\in \mathcal{C}_{b,c}$ such that $\Phi_{T}(g)\preceq_{d_{\mathcal{C}}} g$, then $f_{T}\in \mathcal{O}(g)$.
\end{enumerate}
\end{theorem}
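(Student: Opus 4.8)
The plan is to establish the four assertions in turn: left Smyth completeness of $(\mathcal{C}_{b,c},d_{\mathcal{C}})$, the contractivity of $\Phi_{T}$ with constant $\tfrac{1}{2}$, the existence and uniqueness of $f_{T}$, and finally the two asymptotic bounds. For existence and uniqueness I would combine the first two facts: once $\Phi_{T}$ is a $\tfrac12$-contraction for $d_{\mathcal{C}}$, the same estimate applied to both orderings of its arguments yields $d_{\mathcal{C}}^{s}(\Phi_{T}(f),\Phi_{T}(g))\leq \tfrac12 d_{\mathcal{C}}^{s}(f,g)$, so $\Phi_{T}$ is a Banach contraction on the metric space $(\mathcal{C}_{b,c},d_{\mathcal{C}}^{s})$. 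Left Smyth completeness makes this metric space complete, since every $d^{s}_{\mathcal{C}}$-Cauchy sequence is in particular left $K$-Cauchy and hence $\tau(d_{\mathcal{C}}^{s})$-convergent; the classical Banach fixed point theorem then gives a unique $f_{T}$ to which every orbit $(\Phi_{T}^{n}(f))_{n}$ converges with respect to $\tau(d_{\mathcal{C}}^{s})$.

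The contraction estimate is where I would spend the computational effort. Writing out $d_{\mathcal{C}}(\Phi_{T}(f),\Phi_{T}(g))$, the indices $n=1$ and $n\notin\mathbb{N}_{b}$ contribute nothing, the two values coinciding (they are $c$ or $\infty$), so only $n=b^{k}$ with $k\geq 1$ survive. For such $n$ one has $\Phi_{T}(f)(n)=af(n/b)+h(n)$, and since $h(n)\geq 0$ and $a\geq 2$ the elementary inequality $\frac{1}{av+H}-\frac{1}{au+H}\leq \frac{1}{a}\left(\frac{1}{v}-\frac{1}{u}\right)$, valid for $u\geq v>0$ and $H\geq 0$ (and robust to $u=\infty$ under $1/\infty=0$), bounds each summand by $\tfrac1a$ times the corresponding reciprocal gap evaluated at $n/b=b^{k-1}$. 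Reindexing $k\mapsto k-1$ and using $b>1$ to absorb the weight, namely $2^{-b^{k}}\leq 2^{-b^{k-1}}$, identifies the resulting sum with a subseries of $d_{\mathcal{C}}(f,g)$; as all terms are nonnegative this gives $d_{\mathcal{C}}(\Phi_{T}(f),\Phi_{T}(g))\leq \tfrac1a d_{\mathcal{C}}(f,g)\leq \tfrac12 d_{\mathcal{C}}(f,g)$. The left Smyth completeness itself I would obtain from the known Smyth completeness of the complexity space together with the fact that $\mathcal{C}_{b,c}$ is a $d_{\mathcal{C}}^{s}$-closed subset, so left $K$-Cauchy sequences remain inside it and converge; this is the structural input, citable from \cite{Schellekens}, and admits a direct verification via the reciprocals.

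For the asymptotic bounds I would pass through the order structure. First note that $\Phi_{T}$ is monotone for $\preceq_{d_{\mathcal{C}}}$: since $f\preceq_{d_{\mathcal{C}}}g$ means $f(n)\leq g(n)$ for all $n$, and $\Phi_{T}$ acts by the order-preserving operations $t\mapsto at+h(n)$, we get $\Phi_{T}(f)\preceq_{d_{\mathcal{C}}}\Phi_{T}(g)$. For assertion (1), the hypothesis $g\preceq_{d_{\mathcal{C}}}\Phi_{T}(g)$ together with monotonicity makes $(\Phi_{T}^{n}(g))_{n}$ increasing; being the orbit of a contraction it converges for $\tau(d_{\mathcal{C}}^{s})$ to $f_{T}$, and by Lemma \ref{least} this limit is its supremum. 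Hence $f_{T}=\sup_{n}\Phi_{T}^{n}(g)\in \uparrow_{\preceq_{d_{\mathcal{C}}}}g$, i.e. $g(n)\leq f_{T}(n)$ for all $n$, which is exactly $f_{T}\in\Omega(g)$. Equivalently one may invoke Corollary \ref{completeness2}, whose hypotheses hold here by Proposition \ref{completeness}.

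The main obstacle is assertion (2), because it concerns a decreasing orbit and therefore falls outside the supremum-based machinery of Lemma \ref{least} and the corollaries, which are tailored to increasing sequences. Here $\Phi_{T}(g)\preceq_{d_{\mathcal{C}}}g$ and monotonicity give $\Phi_{T}^{n}(g)\preceq_{d_{\mathcal{C}}}g$ for every $n$, i.e. $\Phi_{T}^{n}(g)(m)\leq g(m)$ for all $m,n$, and the task is to transfer this pointwise bound to the limit. Since $d_{\mathcal{C}}^{s}(h_{1},h_{2})\geq 2^{-m}\bigl|\tfrac{1}{h_{1}(m)}-\tfrac{1}{h_{2}(m)}\bigr|$ for each fixed $m$, convergence of $\Phi_{T}^{n}(g)$ to $f_{T}$ for $\tau(d_{\mathcal{C}}^{s})$ forces $\Phi_{T}^{n}(g)(m)\to f_{T}(m)$ for every $m$ (with the convention $1/\infty=0$). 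Passing to the limit in $\Phi_{T}^{n}(g)(m)\leq g(m)$ yields $f_{T}(m)\leq g(m)$ for all $m$, that is $f_{T}\preceq_{d_{\mathcal{C}}}g$, which is precisely $f_{T}\in\mathcal{O}(g)$. Thus the delicate point throughout is matching the metric convergence supplied by the contraction with the order relations that encode the $\mathcal{O}$ and $\Omega$ bounds.
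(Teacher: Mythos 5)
Your proposal is correct, but it follows a genuinely different path from the paper. The paper does not actually prove Theorem \ref{SchT} where it is stated: it recalls it from \cite{Schellekens}, citing the left Smyth completeness of $(\mathcal{C}_{b,c},d_{\mathcal{C}})$ (via closedness in $(\mathcal{C},d^{s}_{\mathcal{C}})$) and the contraction estimate as known facts, and only later ``re-proves'' it by showing it is a special case of Theorem \ref{CompTech}: Proposition \ref{Cont2} converts contractivity into monotonicity and $\tau(d_{\mathcal{C}})$-continuity, Proposition \ref{Cont} upgrades this to orbital $\preceq_{d_{\mathcal{C}}}$-continuity at a subsolution $g$, and Proposition \ref{Cont2}(2) supplies the comparison $g\preceq_{d_{\mathcal{C}}}f$ needed to feed both hypotheses of Theorem \ref{CompTech} simultaneously. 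You instead give a self-contained classical argument: an explicit computation of the contraction constant (your estimate in fact yields the sharper $\tfrac{1}{a}\le\tfrac12$), the observation that left Smyth completeness implies completeness of $(\mathcal{C}_{b,c},d^{s}_{\mathcal{C}})$ so that Banach's theorem applies, Lemma \ref{least} for the $\Omega$-bound, and a bare-hands pointwise passage to the limit for the $\mathcal{O}$-bound. What your route buys is worth noting: assertion (2) of Theorem \ref{SchT} is a standalone implication, whereas the paper's recovery through Theorem \ref{CompTech} can only produce $f_{T}\in\mathcal{O}(f)$ in tandem with a companion subsolution $g\preceq_{d_{\mathcal{C}}}\Phi_{T}(g)$; your limit argument (using that $\tau(d^{s}_{\mathcal{C}})$-convergence forces coordinatewise convergence of reciprocals, hence $\Phi_{T}^{n}(g)(m)\to f_{T}(m)$, so the pointwise bound $\Phi_{T}^{n}(g)(m)\le g(m)$ survives in the limit) needs no such companion and is faithful to the statement as given. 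What the paper's route buys is the conceptual point of the whole section, namely that the Schellekens method is subsumed by the orbital-continuity fixed point theorem. The only places where you should tread carefully are the usual edge cases of the complexity space (values $0$ or $\infty$ in the reciprocal inequality), but these are handled by your conventions and do not affect correctness.
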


The technique introduced by the above result was tested and illustrated successfully with the following particular case of the recurrence equation (\ref{DCR}):

\begin{equation}
T_{M}(n)=\left\{\begin{array}{ll}
c& \text{ if } n=1,\\
2T_{M}(\frac{n}{2})+\frac{n}{2}& \text{ if } n\in \mathbb{N}_{2},\\ 
\end{array}\right. \label{DCRMerge} 
\end{equation}  where $c\in \mathbb{R}^{+}$. Therefore Schellekens proved that the mapping $\Phi_{T_{M}}:\mathcal{C}_{2,c}\rightarrow \mathcal{C}_{2,c}$, defined by 

\begin{equation}
\Phi_{T_{M}}(f)(n)=\left\{
\begin{array}{ll}
c & \text{\textrm{if }}n=1, \\
2f(\frac{n}{2})+\frac{n}{2} & \text{\textrm{if }} n\in \mathbb{N}_{2}, \\
\infty & \text{ otherwise} ,
\end{array}
\right.  \label{funct1M}
\end{equation} for all $f\in \mathcal{C}_{2,c}$, satisfies the following: $g_{1}\preceq_{d_{\mathcal{C}}} \Phi_{T_{M}}(g_{1})$ and $\Phi_{T_{M}}(g_{2})\preceq_{d_{\mathcal{C}}} g_{2}$ for any $g\in \mathcal{C}_{2,c}$ if and only if $g_{1}=g_{2}$ and they are defined by

\begin{equation}
g(n)=\left\{
\begin{array}{ll}
c & \text{\textrm{if }}n=1, \\
\frac{1}{2} n\log_{2}(n) & \text{\textrm{if }} n\in \mathbb{N}_{2}, \\
\infty & \text{ otherwise} ,
\end{array}
\right.  \label{funct1M2}
\end{equation}

In \cite{RPV,RomVal2013}, the technique provided by Theorem \ref{SchT} was extended to those cases in which the recurrence equation associated to the running time of computing is of the type below:

\begin{equation}
T(n)=\left\{
\begin{array}{ll}
c_{n} & \text{\textrm{if }} 1\leq n\leq k \\
\sum_{i=1}^{k}a_{i}T(n-i)+h(n) & \text{\textrm{if }}n> k
\end{array}
\right. ,  \label{genreclinear}
\end{equation}
where $h\in \mathcal{C}$ such that $h(n)<\infty $ for all $n\in
\mathbb{N}$, $k\in\mathbb{N}$, $c_{i},a_{i}\in\mathbb{R}^{+}$ with $a_{i}\geq 1$ for all
$1\leq i\leq k$.

Observe that the recurrence equations of type (\ref{DCR}) can be recovered from those of type
(\ref{genreclinear}). In fact, the former recurrence equations can be transformed into one of the following type
\begin{equation}
S(m)=\left\{
\begin{array}{ll}
c & \text{\textrm{if }} m=1 \\
aS(m-1)+r(m) & \text{\textrm{if }}m> 1
\end{array}
\right. ,  \label{reclinearDC}
\end{equation}
where $S(m)=T(b^{m-1})$ and $r(m)=h(b^{m-1})$ for all
$m\in\mathbb{N}$. (Recall that $\mathbb{N} _{b}=\{b^{k}:k\in
\mathbb{N}\}$ with $b\in\mathbb{N} $ and $b>1$).

The asymptotic lower and upper bounds for a few celebrated algorithms, like Quicksort, Hanoi, Largetwo and Fibonnacci (see \cite{Cull,BB1988}), whose running time of computing holds the recurrence equation (\ref{genreclinear}), were discussed by means of appropriate versions of the technique exposed in Theorem \ref{SchT} and, thus, by means of the Banach fixed point theorem. Notice that in such versions the unique thing to be proved, additionally to original Schellekens' proof, was the contractive character of the mapping $\Phi_{T}: \mathcal{C}_{c_{1}\ldots, c_{k}}\rightarrow \mathcal{C}_{c_{1}\ldots, c_{k}}$ associated to the recurrence equation (\ref{genreclinear}) and the left Smyth completeness of the subset $\mathcal{C}_{c_{1}\ldots, c_{k}}$ with respect to $\tau(d^{s}_{\mathcal{C}})$ , where $\mathcal{C}_{c_{1}\ldots, c_{k}}=\{f\in \mathcal{C}: f(i)=c_{i} \mbox{ for all } 1\leq i\leq k\}$ and

\begin{equation}
\Phi_{T}(f)(n)=\left\{
\begin{array}{ll}
c_{i} & \text{\textrm{if }} 1\leq i\leq k, \\
\sum_{i=1}^{k}a_{i}f(n-i)+h(n)  & \text{\textrm{if }} n>k, \\
\end{array}
\right.  \label{genreclinear2}
\end{equation} for all $f\in \mathcal{C}_{c_{1}\ldots,c_{k}}$. Thus the technique introduced in Theorem \ref{SchT} was extended to the new case as follows:

\begin{theorem}\label{SchT2}The quasi-metric space $(\mathcal{C}_{c_{1}\ldots, c_{k}},d_{\mathcal{C}})$ is left Smyth complete and the mapping $\Phi_{T}$ given by (\ref{genreclinear2}) satisfies that $$ d_{\mathcal{C}}(\Phi_{T}(f),\Phi_{T}(g))\leq \left(\max_{1\leq i\leq k}\frac{1}{a_{i}}\right) \left( \frac{2^{k}-1}{2^{k}}\right) d_{\mathcal{C}}(f,g)$$ for all $f,g\in \mathcal{C}_{c_{1},\ldots,c_{k}}$. Thus, an algorithm whose running time of computing holds a recurrence equation of the form (\ref{genreclinear}) has a unique solution $f_{T}\in \mathcal{C}_{c_{1}\ldots, c_{k}}$. Moreover, the following assertions hold:
\begin{enumerate}
\item If there exists $g\in\mathcal{C}_{c_{1}\ldots, c_{k}}$ such that $g\preceq_{d_{\mathcal{C}}} \Phi_{T}(g)$, then $f_{T}\in \Omega(g)$.
\item If there exists $g\in \mathcal{C}_{c_{1}\ldots, c_{k}}$ such that $\Phi_{T}(g)\preceq_{d_{\mathcal{C}}} g$, then $f_{T}\in \mathcal{O}(g)$.
\end{enumerate}
\end{theorem}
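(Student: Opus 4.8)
The plan is to follow the blueprint of Schellekens' original argument behind Theorem \ref{SchT}, establishing in turn the left Smyth completeness of the subspace, the contractive estimate, the existence and uniqueness of the fixed point via a Banach-type principle, and finally the two asymptotic bounds. First I would settle the left Smyth completeness of $(\mathcal{C}_{c_1,\ldots,c_k},d_{\mathcal{C}})$. I would argue that $\mathcal{C}_{c_1,\ldots,c_k}$ is a closed subset of the complexity space with respect to $\tau(d_{\mathcal{C}}^s)$: if a sequence converges in $\tau(d_{\mathcal{C}}^s)$ then $\frac{1}{f_n(m)}\to\frac{1}{f(m)}$ for every $m$, so the defining constraints $f(i)=c_i$ (for $1\le i\le k$) pass to the limit; since a closed subspace of a left Smyth complete space is again left Smyth complete, the claim follows. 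Alternatively, one reproduces the argument Schellekens used for $\mathcal{C}_{b,c}$: for a left $K$-Cauchy sequence the reals $\frac{1}{f_n(m)}$ are Cauchy for each fixed $m$, which produces a candidate limit inside the subspace.

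The heart of the proof, and the step I expect to be the main obstacle, is the contractive inequality. For $1\le n\le k$ the $n$-th summand of $d_{\mathcal{C}}(\Phi_{T}(f),\Phi_{T}(g))$ vanishes because $\Phi_{T}(f)(n)=c_n=\Phi_{T}(g)(n)$. For $n>k$ I would first discard $h(n)$: writing $F=\sum_{i}a_i f(n-i)$ and $G=\sum_i a_i g(n-i)$ and using that $t\mapsto \frac{1}{G+t}-\frac{1}{F+t}$ is non-increasing on $[0,\infty)$ when $F\ge G$ (and that both relevant positive parts are $0$ otherwise), one gets
$$\max\!\left(\tfrac{1}{\Phi_{T}(g)(n)}-\tfrac{1}{\Phi_{T}(f)(n)},0\right)\le \max\!\left(\tfrac{1}{G}-\tfrac{1}{F},0\right).$$
Then I would telescope through the hybrids $P_j=\sum_{i\le j}a_i g(n-i)+\sum_{i>j}a_i f(n-i)$, so that $P_0=F$ and $P_k=G$, rewriting each increment as
$$\tfrac{1}{P_j}-\tfrac{1}{P_{j-1}}=\tfrac{a_j f(n-j)g(n-j)}{P_jP_{j-1}}\left(\tfrac{1}{g(n-j)}-\tfrac{1}{f(n-j)}\right).$$
Using $P_{j-1}\ge a_j f(n-j)$ and $P_j\ge a_j g(n-j)$, the coefficient is at most $\frac{1}{a_j}$ whenever the bracket is nonnegative, while the increment is nonpositive otherwise; hence each increment is bounded by $\left(\max_i\frac{1}{a_i}\right)\max\!\left(\frac{1}{g(n-j)}-\frac{1}{f(n-j)},0\right)$. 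Summing over $j$ and then over $n$ against the weights $2^{-n}$, a change of index $m=n-j$ factors out $\sum_{j=1}^{k}2^{-j}=\frac{2^k-1}{2^k}$ and leaves $d_{\mathcal{C}}(f,g)$, producing exactly the claimed constant $\left(\max_i\frac{1}{a_i}\right)\frac{2^k-1}{2^k}$. Getting the interchange of summations and the geometric factor right is the delicate point.

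Once the estimate is in hand, since $\max_i\frac{1}{a_i}\le 1$ (because $a_i\ge 1$) and $\frac{2^k-1}{2^k}<1$, the map $\Phi_{T}$ is a contraction on the left Smyth complete space $\mathcal{C}_{c_1,\ldots,c_k}$, so the Banach fixed point theorem for such spaces yields a unique fixed point $f_T$, which is precisely the unique solution of (\ref{genreclinear}). For the asymptotic bounds I would first note that $\Phi_{T}$ is monotone for $\preceq_{d_{\mathcal{C}}}$, since $a_i\ge 0$ and $h(n)\ge 0$ force $\Phi_{T}(f)(n)\le\Phi_{T}(g)(n)$ whenever $f(n)\le g(n)$ for all $n$. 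If $g\preceq_{d_{\mathcal{C}}}\Phi_{T}(g)$, monotonicity makes $(\Phi_{T}^n(g))_n$ increasing; it converges to $f_T$ in $\tau(d_{\mathcal{C}}^s)$, so by Lemma \ref{least} $f_T$ is its supremum and hence an upper bound, giving $g\preceq_{d_{\mathcal{C}}}f_T$, i.e. $f_T\in\Omega(g)$. Symmetrically, $\Phi_{T}(g)\preceq_{d_{\mathcal{C}}}g$ makes the orbit decreasing with limit $f_T$, whence $f_T(m)\le\Phi_{T}^n(g)(m)\le g(m)$ for every $m$ and $f_T\preceq_{d_{\mathcal{C}}}g$, i.e. $f_T\in\mathcal{O}(g)$.
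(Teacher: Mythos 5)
Your proposal is correct, but it follows a genuinely different route from the paper. The paper does not prove Theorem \ref{SchT2} from scratch at all: it treats it as a known result from \cite{RPV,RomVal2013} (the closedness of $\mathcal{C}_{c_{1},\ldots,c_{k}}$ in $\tau(d^{s}_{\mathcal{C}})$ and the contractive estimate are both cited, not derived), and the point of the paper is rather to show that the \emph{conclusions} of Theorem \ref{SchT2} can be recovered from the new order-theoretic framework: left Smyth completeness gives $\preceq_{d_{\mathcal{C}}}$-completeness and hence chain completeness (Propositions \ref{completeness} and \ref{usefullemma}); the contraction hypothesis gives monotonicity, $\tau(d_{\mathcal{C}})$-continuity and the comparison $g\preceq_{d_{\mathcal{C}}}f$ between sub- and super-fixed points (Proposition \ref{Cont2}); Proposition \ref{Cont} then yields orbital $\preceq_{d_{\mathcal{C}}}$-continuity at $g$, and Theorem \ref{CompTech} delivers a fixed point in $\Omega(g)\cap\mathcal{O}(f)$, with uniqueness deliberately left to the theory of finite difference equations. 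You instead give a self-contained Banach-style proof: your closedness argument, your telescoping estimate through the hybrids $P_j$ (the bounds $P_{j-1}\ge a_jf(n-j)$, $P_j\ge a_jg(n-j)$ and the re-indexing $m=n-j$ producing the factor $\sum_{j=1}^{k}2^{-j}=\frac{2^k-1}{2^k}$ all check out), and your monotone-orbit arguments for the two asymptotic bounds are all sound. What your approach buys is an actual proof of the contractive inequality, which nowhere appears in this paper; what the paper's approach buys is the illustration that the contraction constant is irrelevant once monotonicity, orbital continuity and the order comparison are available, which is precisely the thesis of Section 3.
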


Notice that, by means of the transformation given by (\ref{reclinearDC}), Theorem \ref{SchT} can be retrieved from Theorem \ref{SchT2}.

It must be stressed that the uniqueness of solution to the recurrence equations (or equivalently the uniqueness of fixed point of the mapping $\Phi_{T}$) under consideration in Theorems \ref{SchT} and \ref{SchT2} is guaranteed by the left Smyth completeness and the Banach fixed point theorem (we refer the reader to \cite{Schellekens} for a detailed discussion). However, from a complexity analysis viewpoint, it is not necessary to debate about the uniqueness of the solution because the theory of finite difference equations provides such a uniqueness for the so-called initial value problems (see, for instance, Theorem 3.1.1 in \cite{Cull}). So, the really novel and interesting about the techniques introduced by Theorems \ref{SchT} and  \ref{SchT2} is exactly  the possibility of studying the asymptotic behavior of the solutions via fixed point arguments which differs from the classical difference equation approach (see, again, \cite{Cull}).

Inspired, in part, by the fact already exposed,  L.M. Garc\'ia-Raffi, S. Romaguera and Schellekens provided a mathematical method for asymptotic complexity analysis of algorithms which is not based on the use of the Banach fixed point theorem, or equivalently of Theorems \ref{SchT} and \ref{SchT2}, in \cite{GRRS2008}. Concretely they provided, by means of fixed point techniques and the use of increasing sequences of complexity functions, asymptotic upper bounds for the running time of computing of the so-called Probabilistic Divide and Conquer algorithms (see \cite{Knuth1973} for a detailed discussion of this type of algorithms).

Let us recall that the running time of computing of Probabilistic Divide and Conquer algorithms satisfies the following recurrence equation:

\begin{equation}
T(n)=\left\{
\begin{array}{ll}
c_{n} & \text{\textrm{if }} 1\leq n<k \\
\sum_{i=1}^{n-1}v_{i}(n)T(i)+h(n) & \text{\textrm{if }}n\geq  k
\end{array}
\right. ,  \label{probabilistic}
\end{equation}
where $h\in \mathcal{C}$ such that $h(n)<\infty $ for all $n\in
\mathbb{N}$, $k\in\mathbb{N}$ such that $k\geq 2$ and $c_{i}\in\mathbb{R}^{+}$ for all $1\leq i<k$. Moreover, $(v_{i})_{i\in\mathbb{N}}$ is a sequence of positive mappings defined on $\mathbb{N}$ in such a way that there exists $K\in\mathbb{R}^{+}$ with $K>0$ satisfying that $\sum_{i=1}^{n-1}v_{i}(n)\leq K$.

 To get asymptotic upper bounds of the running time in those cases in which the recurrence equation (\ref{probabilistic}) is under consideration the next auxiliary result was key and it was proved in \cite{GRRS2008}.

\begin{proposition}\label{useful}Let $\mathcal{R}\subseteq \mathcal{C}$ such that $(\mathcal{R},d_{\mathcal{C}})$ is left Smyth complete. Let $\Phi:\mathcal{R} \rightarrow \mathcal{R}$ be a monotone mapping with respect to $\preceq_{d_{\mathcal{C}}}$. If there exists $g\in \mathcal{R}$ such that $g\preceq_{d_{\mathcal{C}}} \Phi(g)$, then there exists $f\in \mathcal{R}$ such that the sequence $(\Phi^{n}(g))_{n\in\mathbb{N}^\star}$ converges to $f$ with respect to $\tau(d^{s}_{\mathcal{C}})$ and, in addition, $f$ is an upper bound of $(\Phi^{n}(g))_{n\in\mathbb{N}^\star}$ in $(X,\preceq_{d_{\mathcal{C}}})$.
\end{proposition}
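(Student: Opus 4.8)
The plan is to recognize this as essentially a streamlined instance of the argument already used in the proof of Proposition \ref{completeness}(1): first show that the orbit $(\Phi^{n}(g))_{n\in\mathbb{N}^\star}$ is an increasing sequence in $(\mathcal{R},\preceq_{d_{\mathcal{C}}})$, then exploit left Smyth completeness to obtain a $\tau(d^{s}_{\mathcal{C}})$-limit $f\in\mathcal{R}$, and finally invoke Lemma \ref{least} to certify that $f$ is an upper bound of the orbit.

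First I would verify that $(\Phi^{n}(g))_{n\in\mathbb{N}^\star}$ is increasing. Starting from the hypothesis $g\preceq_{d_{\mathcal{C}}}\Phi(g)$ and using the monotonicity of $\Phi$ with respect to $\preceq_{d_{\mathcal{C}}}$, a straightforward induction gives $\Phi^{n}(g)\preceq_{d_{\mathcal{C}}}\Phi^{n+1}(g)$ for every $n\in\mathbb{N}^\star$; indeed, applying $\Phi$ to the inequality $\Phi^{n}(g)\preceq_{d_{\mathcal{C}}}\Phi^{n+1}(g)$ yields the next one. Hence the orbit is increasing in $(\mathcal{R},\preceq_{d_{\mathcal{C}}})$.

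Next I would observe that an increasing sequence in $(\mathcal{R},\preceq_{d_{\mathcal{C}}})$ is automatically left K-Cauchy. By transitivity of the specialization order we have $\Phi^{n}(g)\preceq_{d_{\mathcal{C}}}\Phi^{m}(g)$ for all $m\geq n$, which by definition of $\preceq_{d_{\mathcal{C}}}$ means $d_{\mathcal{C}}(\Phi^{n}(g),\Phi^{m}(g))=0$ for all $m\geq n$. Thus, given any $\varepsilon>0$, the left K-Cauchy condition holds with, say, $n_{0}=1$. Since $(\mathcal{R},d_{\mathcal{C}})$ is left Smyth complete, this sequence converges to some $f\in\mathcal{R}$ with respect to $\tau(d^{s}_{\mathcal{C}})$, which is exactly the first assertion of the statement.

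Finally I would apply Lemma \ref{least} to the increasing sequence $(\Phi^{n}(g))_{n\in\mathbb{N}^\star}$ and its $\tau(d^{s}_{\mathcal{C}})$-limit $f$, concluding that $f$ is the supremum of the orbit in $(\mathcal{R},\preceq_{d_{\mathcal{C}}})$; in particular $f$ is an upper bound of $(\Phi^{n}(g))_{n\in\mathbb{N}^\star}$, as required. I do not anticipate any real obstacle here: the only point demanding a moment of care is the verification that an increasing sequence is left K-Cauchy, and this reduces to the elementary remark that comparability under $\preceq_{d_{\mathcal{C}}}$ forces $d_{\mathcal{C}}$-distance zero between any term and all of its successors.
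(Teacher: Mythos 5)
Your proof is correct and follows precisely the route the paper itself indicates (it does not write the argument out, deferring to \cite{GRRS2008}, but remarks that Proposition \ref{useful} ``can be derived from Lemma \ref{least} and Propositions \ref{completeness} and \ref{usefullemma}''): monotonicity plus $g\preceq_{d_{\mathcal{C}}}\Phi(g)$ makes the orbit increasing, an increasing sequence is left K-Cauchy since consecutive comparabilities force $d_{\mathcal{C}}$-distance zero, left Smyth completeness yields the $\tau(d^{s}_{\mathcal{C}})$-limit $f$, and Lemma \ref{least} shows $f$ is the supremum, hence an upper bound. No gaps.
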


A specific method to provide the aforementioned asymptotic upper bounds for the solution to recurrence equations of type (\ref{probabilistic}) was proved using Proposition \ref{useful} in \cite{GRRS2008}. Concretely, it was given the result below.

\begin{theorem}\label{R}Let $k\in\mathbb{N}$ with $k\geq 2$ and let $\mathcal{C}_{c_{1},\ldots,c_{k-1}}$ be the subset of $\mathcal{C}$ given by $\mathcal{C}_{c_{1},\ldots,c_{k-1}}=\{f\in \mathcal{C}:f(i)=c_{i} \text{ for all } 1\leq i<k\}$. Define the mapping $\Phi_{T}:\mathcal{C}_{c_{1},\ldots,c_{k-1}}\rightarrow \mathcal{C}_{c_{1},\ldots,c_{k-1}}$ by 

\begin{equation}
\Phi_{T}(f)(n)=\left\{
\begin{array}{ll}
c_{n} & \text{\textrm{if }} 1\leq n<k \\
\sum_{i=1}^{n-1}v_{i}(n)f(i)+h(n) & \text{\textrm{if }} n\geq k \\
\end{array}
\right.  ,\label{RecProb}
\end{equation} for all $f\in \mathcal{C}_{c_{1},\ldots,c_{k-1}}$. Then the following assertions hold:
\begin{enumerate}
\item The quasi-metric space $(\mathcal{C}_{c_{1},\ldots,c_{k-1}},d_{\mathcal{C}})$ is left Smyth complete 
\item The mapping $\Phi_{T}$ is monotone with respect to $\preceq_{d_{\mathcal{C}}}$ and there exists $f_{T}\in \mathcal{C}_{c_{1},\ldots,c_{k-1}}$ such that $Fix(\Phi_{T})=\{f_{T}\}$. So $f_{T}$ is the unique solution to the recurrence equation (\ref{probabilistic}). 
\item If there exists $f\in \mathcal{C}_{c_{1},\ldots,c_{k-1}}$ such that $\Phi(f)\preceq_{d_{\mathcal{C}}} f$, then $f_{T}\in \mathcal{O}(f)$.
\end{enumerate}
\end{theorem}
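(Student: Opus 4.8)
The plan is to establish the three assertions in turn, exploiting that $\Phi_{T}$ has a \emph{triangular} structure: for $1\le n<k$ the value $\Phi_{T}(f)(n)$ is the constant $c_{n}$ independently of $f$, while for $n\ge k$ the value $\Phi_{T}(f)(n)=\sum_{i=1}^{n-1}v_{i}(n)f(i)+h(n)$ depends only on $f(1),\ldots,f(n-1)$. For assertion (1) I would follow the same scheme used for the corresponding parts of Theorems \ref{SchT} and \ref{SchT2}: take a left $K$-Cauchy sequence in $\mathcal{C}_{c_{1},\ldots,c_{k-1}}$, use the completeness of the ambient complexity space to obtain a $\tau(d^{s}_{\mathcal{C}})$-limit, and observe that the constraints $f(i)=c_{i}$ (for $1\le i<k$) are preserved in the limit because they involve only finitely many fixed coordinates; hence the limit stays in $\mathcal{C}_{c_{1},\ldots,c_{k-1}}$ and the subspace is left Smyth complete. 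This step is routine.

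For assertion (2), monotonicity is immediate: if $f\preceq_{d_{\mathcal{C}}}g$, i.e. $f(n)\le g(n)$ for all $n$, then since every $v_{i}(n)>0$ one gets $\Phi_{T}(f)(n)\le\Phi_{T}(g)(n)$ for $n\ge k$ and equality for $n<k$, so $\Phi_{T}(f)\preceq_{d_{\mathcal{C}}}\Phi_{T}(g)$. For existence I would define $g\in\mathcal{C}_{c_{1},\ldots,c_{k-1}}$ by $g(n)=c_{n}$ for $n<k$ and $g(n)=h(n)$ for $n\ge k$; then $g\in\mathcal{C}$ (as $h\in\mathcal{C}$) and $g\preceq_{d_{\mathcal{C}}}\Phi_{T}(g)$, since $\Phi_{T}(g)(n)\ge h(n)=g(n)$. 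By assertion (1) and Proposition \ref{useful}, the increasing orbit $(\Phi_{T}^{n}(g))_{n\in\mathbb{N}^\star}$ converges in $\tau(d^{s}_{\mathcal{C}})$ to some $f\in\mathcal{C}_{c_{1},\ldots,c_{k-1}}$ that is an upper bound of the orbit. The triangular structure then upgrades this upper bound to a fixed point: an induction on $n$ shows that $\Phi_{T}^{n}(g)(m)$ equals the value forced by the recurrence for every $m\le k-1+n$, so the orbit stabilises coordinatewise; since $\tau(d^{s}_{\mathcal{C}})$-convergence forces coordinatewise convergence, the limit $f$ agrees with the function $f_{T}$ uniquely determined by the recurrence. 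As each coordinate of $\Phi_{T}(f_{T})$ depends on only finitely many already-stabilised coordinates of $f_{T}$, one checks $\Phi_{T}(f_{T})=f_{T}$, and $f_{T}\in\mathcal{C}$ because $f=f_{T}$. Uniqueness is forced by the same recursion, which leaves no freedom once the $c_{i}$ are fixed; thus $Fix(\Phi_{T})=\{f_{T}\}$ and $f_{T}$ is the unique solution of (\ref{probabilistic}).

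For assertion (3), suppose $\Phi_{T}(f)\preceq_{d_{\mathcal{C}}}f$. Applying the monotone map repeatedly yields a $\preceq_{d_{\mathcal{C}}}$-decreasing chain $f\succeq_{d_{\mathcal{C}}}\Phi_{T}(f)\succeq_{d_{\mathcal{C}}}\Phi_{T}^{2}(f)\succeq_{d_{\mathcal{C}}}\cdots$, i.e. $\Phi_{T}^{n}(f)(m)\le f(m)$ for all $m,n$. Fixing $m$ and taking $n\ge m-k+1$ so that the orbit has stabilised, the stabilisation identity gives $f_{T}(m)=\Phi_{T}^{n}(f)(m)\le f(m)$. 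Hence $f_{T}(m)\le f(m)$ for every $m$, that is $f_{T}\preceq_{d_{\mathcal{C}}}f$ (equivalently $d_{\mathcal{C}}(f_{T},f)=0$), which, as recalled above, implies $f_{T}\in\mathcal{O}(f)$.

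I expect the main obstacle to be the existence half of assertion (2): Proposition \ref{useful} delivers only an upper bound of the orbit rather than a fixed point, and membership of the candidate in $\mathcal{C}$ is not a priori clear from the bound $\sum_{i=1}^{n-1}v_{i}(n)\le K$ alone. Overcoming this hinges on the coordinatewise stabilisation afforded by the triangular structure of $\Phi_{T}$, which simultaneously identifies the Smyth limit with the recursively defined $f_{T}$, secures $f_{T}\in\mathcal{C}$, and produces the fixed point equation; the same device is what drives the stabilisation argument in assertion (3).
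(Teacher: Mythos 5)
Your argument is correct, but it follows a genuinely different route from the paper's. The paper does not reprove Theorem \ref{R} coordinatewise: it cites assertion (1) from \cite{RomVal2013}, delegates the uniqueness in assertion (2) to the classical induction/finite-difference argument of \cite{GRRS2008,Cull}, and obtains existence and assertion (3) as an application of its new general result, Theorem \ref{CompTech}. Concretely, the paper takes the same seed $g_{h}$ that you do, notes $g_{h}\preceq_{d_{\mathcal{C}}}\Phi_{T}(f)$ for every $f\in\mathcal{C}_{c_{1},\ldots,c_{k-1}}$, and then verifies that $\Phi_{T}$ is \emph{orbitally} $\preceq_{d_{\mathcal{C}}}$-continuous at $g_{h}$: letting $f^{\star}$ be the supremum of the increasing orbit, it proves $\Phi_{T}(f^{\star})\preceq_{d_{\mathcal{C}}}f^{\star}$ by an $\varepsilon$-argument that crucially uses the uniform bound $\sum_{i=1}^{n-1}v_{i}(n)\leq K$; Theorem \ref{CompTech} then delivers the fixed point together with $f_{T}\in\Omega(g_{h})\cap\mathcal{O}(f)$. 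You instead exploit the triangular structure of $\Phi_{T}$ to show that the orbit stabilises coordinatewise at the recursively determined values; your stabilisation identity is correct, and in fact $\Phi_{T}^{n}(u)(m)=f_{T}(m)$ for \emph{every} $u\in\mathcal{C}_{c_{1},\ldots,c_{k-1}}$ and every $m\leq k-1+n$, which is exactly what makes your treatment of assertion (3) and of uniqueness work. Your route is more elementary and self-contained: it bypasses orbital continuity and the bound $K$ at the fixed-point stage, and it yields uniqueness for free, but it is specific to operators with this finite-dependency structure. The paper's route is less direct, yet it is the one that demonstrates the intended point of the section, namely that the general orbital Kleene-type theorem subsumes Theorem \ref{R} (and Theorems \ref{SchT} and \ref{SchT2}) as particular cases.
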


The advantage of the method exposed in the preceding result is given by the fact that it makes use of the Banach fixed point theorem. However, the aforesaid method has been designed specifically for Probabilistic Divide and Conquer algorithms. Observe, in addition, that the uniqueness of solution to the recurrence equation (\ref{probabilistic}) was warrantied by means of induction techniques in \cite{GRRS2008}, i.e., following the aforesaid classical techniques from finite difference equations. Motivated by this fact we show that the theory exposed in Section \ref{Results} provides a general framework for discussing asymptotic bounds (upper  and lower) of the complexity of algorithms in such a way that both mathematical methods for such a purpose given in Theorems \ref{SchT}, \ref{SchT2} and \ref{R} can be retrieved as a particular case. In particular we can state the below method for asymptotic complexity analysis of algorithms. Notice that such a method does not deal with uniqueness since that's what the theory of finite difference equation guarantees.

\begin{theorem}\label{CompTech}Let $\mathcal{R}\subseteq \mathcal{C}$ such that $(\mathcal{R},\preceq_{d_{\mathcal{C}}})$ is chain complete.  Let $\Phi:\mathcal{R} \rightarrow \mathcal{R}$ be a monotone mapping. If there exist $f,g\in \mathcal{R}$ such that the following assertions hold:
\begin{enumerate}
\item $g\preceq_{d_{\mathcal{C}}} \Phi(g)$ and $\Phi$ is orbitally $\preceq_{d_{\mathcal{C}}}$-continuous at $g$,
\item $g\preceq_{d_{\mathcal{C}}} f$ and  $\Phi(f)\preceq_{d_{\mathcal{C}}} f$.
\end{enumerate} Then there exists $f^{\star}\in \mathcal{R}$ such that $f^{\star}\in Fix(\Phi)$ and $f^{\star}\in \Omega(g)\cap \mathcal{O}(f)$.
\end{theorem}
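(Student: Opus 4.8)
The plan is to read off $f^{\star}$ directly from Corollary \ref{monotdcpoCC}, applied to the chain complete quasi-metric space $(\mathcal{R},d_{\mathcal{C}})$ and the monotone self-mapping $\Phi$, and then to convert the two resulting order relations into the asymptotic classes $\Omega(g)$ and $\mathcal{O}(f)$ by means of the interpretation of the specialization order $\preceq_{d_{\mathcal{C}}}$ recalled at the beginning of this section. Since, by hypothesis, $(\mathcal{R},\preceq_{d_{\mathcal{C}}})$ is chain complete, the pair $(\mathcal{R},d_{\mathcal{C}})$ (with $d_{\mathcal{C}}$ restricted to $\mathcal{R}$) is a chain complete quasi-metric space, and $\Phi$ is monotone, so the ambient hypotheses of Corollary \ref{monotdcpoCC} are met.

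First I would take $x_{0}=g$. Assertion (1) provides exactly condition (2) of Corollary \ref{monotdcpoCC}: namely $g\preceq_{d_{\mathcal{C}}}\Phi(g)$ together with the orbital $\preceq_{d_{\mathcal{C}}}$-continuity of $\Phi$ at $g$. Hence that corollary yields an $f^{\star}\in Fix(\Phi)$ which is the supremum of $(\Phi^{n}(g))_{n\in\mathbb{N}^{\star}}$ in $(\mathcal{R},\preceq_{d_{\mathcal{C}}})$; in particular $f^{\star}\in\uparrow_{\preceq_{d_{\mathcal{C}}}}g$, that is, $g\preceq_{d_{\mathcal{C}}}f^{\star}$.

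Next I would invoke the ``Moreover'' clause of Corollary \ref{monotdcpoCC} with $y_{0}=f$. Assertion (2) gives $g\preceq_{d_{\mathcal{C}}}f$ and $\Phi(f)\preceq_{d_{\mathcal{C}}}f$, which are precisely the hypotheses $x_{0}\preceq_{d_{\mathcal{C}}}y_{0}$ and $\Phi(y_{0})\preceq_{d_{\mathcal{C}}}y_{0}$ demanded by that clause. Therefore $f^{\star}\in\downarrow_{\preceq_{d_{\mathcal{C}}}}f$, i.e. $f^{\star}\preceq_{d_{\mathcal{C}}}f$. At this point the whole existence-and-localization part of the statement is already settled, and the same $f^{\star}$ serves for both bounds.

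Finally I would translate the two order relations into asymptotic bounds. Because $\preceq_{d_{\mathcal{C}}}$ is the specialization order, $g\preceq_{d_{\mathcal{C}}}f^{\star}$ means $d_{\mathcal{C}}(g,f^{\star})=0$, which — as recalled before the Divide and Conquer recurrence — yields $f^{\star}\in\Omega(g)$; similarly $f^{\star}\preceq_{d_{\mathcal{C}}}f$ means $d_{\mathcal{C}}(f^{\star},f)=0$, yielding $f^{\star}\in\mathcal{O}(f)$. Combining, $f^{\star}\in\Omega(g)\cap\mathcal{O}(f)$, as claimed. I do not foresee a genuine obstacle: the argument is essentially a specialization of Corollary \ref{monotdcpoCC}, and the only point needing care is matching the direction of each inequality to the correct asymptotic class, which is dictated by whether $g$ or $f^{\star}$ occupies the first slot of $d_{\mathcal{C}}$.
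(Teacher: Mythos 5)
Your proposal is correct and follows essentially the same route as the paper, which simply invokes Corollary \ref{monotdcpoCC} (with $x_{0}=g$ and $y_{0}=f$) and reads off $f^{\star}\in\Omega(g)\cap\mathcal{O}(f)$ from the resulting relations $g\preceq_{d_{\mathcal{C}}}f^{\star}$ and $f^{\star}\preceq_{d_{\mathcal{C}}}f$. You have merely spelled out the details that the paper's one-line proof leaves implicit, and your matching of the order relations to $\Omega(g)$ and $\mathcal{O}(f)$ is in the correct direction.
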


\begin{proof}By Corollary \ref{monotdcpoCC} we deduce that $Fix(\Phi)\neq\emptyset$ and that there exists $f^{\star}\in Fix(\Phi)$ such that $f^{\star}\in \Omega(g)\cap \mathcal{O}(f)$.  \end{proof}

\begin{corollary}\label{CompTech2}Let $\mathcal{R}\subseteq \mathcal{C}$ such that $\mathcal{R}$ is closed with respect to $\tau(d^{s}_{\mathcal{C}})$. Let $\Phi:\mathcal{R} \rightarrow \mathcal{R}$ be a monotone mapping. If there exist $f,g\in \mathcal{R}$ such that the following assertions hold:
\begin{enumerate}
\item $g\preceq_{d_{\mathcal{C}}} \Phi(g)$ and $\Phi$ is orbitally $\preceq_{d_{\mathcal{C}}}$-continuous at $g$,
\item $g\preceq_{d_{\mathcal{C}}} f$ and  $\Phi(f)\preceq_{d_{\mathcal{C}}} f$.
\end{enumerate} Then there exists $f^{\star}\in \mathcal{R}$ such that $f^{\star}\in Fix(\Phi)$ and $f^{\star}\in \Omega(g)\cap \mathcal{O}(f)$.
\end{corollary}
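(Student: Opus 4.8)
The plan is to deduce Corollary \ref{CompTech2} directly from Theorem \ref{CompTech}. The two statements have identical hypotheses on $\Phi$, $f$ and $g$ and an identical conclusion; they differ only in that Theorem \ref{CompTech} assumes $(\mathcal{R},\preceq_{d_{\mathcal{C}}})$ to be chain complete, whereas here we are handed the weaker-looking topological assumption that $\mathcal{R}$ is closed with respect to $\tau(d^s_{\mathcal{C}})$. So the whole task reduces to showing that this closedness forces $(\mathcal{R},\preceq_{d_{\mathcal{C}}})$ to be chain complete; once that is secured, Theorem \ref{CompTech} applies verbatim and yields $f^{\star}\in Fix(\Phi)$ with $f^{\star}\in\Omega(g)\cap\mathcal{O}(f)$.

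First I would record that the ambient complexity space $(\mathcal{C},d_{\mathcal{C}})$ is $\preceq_{d_{\mathcal{C}}}$-complete. This I would obtain from Proposition \ref{completeness}(3): the complexity space is weightable, with weight $w_{d_{\mathcal{C}}}(f)=\sum_{n=1}^{\infty}2^{-n}\frac{1}{f(n)}$, which is finite for every $f\in\mathcal{C}$ and satisfies the weightability identity for $d_{\mathcal{C}}$, and it is bicomplete, i.e. $(\mathcal{C},d^s_{\mathcal{C}})$ is a complete metric space. Hence $(\mathcal{C},d_{\mathcal{C}})$ is $\preceq_{d_{\mathcal{C}}}$-complete, so every increasing sequence in $(\mathcal{C},\preceq_{d_{\mathcal{C}}})$ $\tau(d^s_{\mathcal{C}})$-converges.

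Next I would transfer $\preceq_{d_{\mathcal{C}}}$-completeness from $\mathcal{C}$ to the closed subset $\mathcal{R}$. Let $(f_n)_{n\in\mathbb{N}}$ be increasing in $(\mathcal{R},\preceq_{d_{\mathcal{C}}})$. Since the restriction of $\preceq_{d_{\mathcal{C}}}$ to $\mathcal{R}$ coincides with the specialization order of the restricted quasi-metric, this sequence is also increasing in $(\mathcal{C},\preceq_{d_{\mathcal{C}}})$, so by the previous step it $\tau(d^s_{\mathcal{C}})$-converges to some $f\in\mathcal{C}$. Because every $f_n$ lies in the $\tau(d^s_{\mathcal{C}})$-closed set $\mathcal{R}$, the limit $f$ belongs to $\mathcal{R}$, and this convergence is precisely convergence in the subspace topology of $\mathcal{R}$. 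Thus $(\mathcal{R},d_{\mathcal{C}})$ is $\preceq_{d_{\mathcal{C}}}$-complete, and Proposition \ref{usefullemma} gives that $(\mathcal{R},\preceq_{d_{\mathcal{C}}})$ is chain complete; concretely, Lemma \ref{least} guarantees that the limit $f$ is in fact the supremum in $(\mathcal{R},\preceq_{d_{\mathcal{C}}})$ of the increasing sequence. Having established chain completeness, I would close the argument by invoking Theorem \ref{CompTech} with the given $f$ and $g$.

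I expect the main obstacle to be the very first step: pinning down and justifying that $(\mathcal{C},d_{\mathcal{C}})$ is $\preceq_{d_{\mathcal{C}}}$-complete, that is, choosing the right completeness property of the complexity space to feed into Proposition \ref{completeness} (weightability together with bicompleteness, which also yields left Smyth completeness). After that choice, the remaining points — the compatibility of the order and the topology under restriction to $\mathcal{R}$, and the closed-subspace transfer of completeness — are routine and rest only on Lemma \ref{least} and Proposition \ref{usefullemma}.
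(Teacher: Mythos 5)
Your proposal is correct and follows essentially the same route as the paper: both arguments reduce the statement to Theorem \ref{CompTech} by showing that closedness of $\mathcal{R}$ with respect to $\tau(d^{s}_{\mathcal{C}})$ forces $(\mathcal{R},\preceq_{d_{\mathcal{C}}})$ to be chain complete, via $\preceq_{d_{\mathcal{C}}}$-completeness and Proposition \ref{usefullemma}. The only (cosmetic) difference is that the paper transfers left Smyth completeness of $(\mathcal{C},d_{\mathcal{C}})$ to the closed subset and then invokes Proposition \ref{completeness}(1), whereas you invoke Proposition \ref{completeness}(3) (weightability plus bicompleteness) for the ambient space and carry out the closed-subspace transfer of $\preceq_{d_{\mathcal{C}}}$-completeness by hand.
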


\begin{proof}If $\mathcal{R}$ is closed with respect to $\tau(d^{s}_{\mathcal{C}})$, then $(\mathcal{R},d_{\mathcal{C}})$ is left Smyth complete, since $(\mathcal{C},d_{\mathcal{C}})$ is left Smyth complete. Proposition \ref{completeness} ensures that  $(\mathcal{R},d_{\mathcal{C}})$ is $\preceq_{d_{\mathcal{C}}}$-complete and, thus, Proposition \ref{usefullemma} gives that $(\mathcal{R},\preceq_{d_{\mathcal{C}}})$ is chain complete. Theorem \ref{CompTech} yields the desired conclusions. \end{proof}

In the following we show that Theorem \ref{SchT} can be recovered from Theorem \ref{CompTech}. To this end, we need the next sequence of useful results. The proof of the below lemma was given in \cite{LRV2018}.

\begin{lemma}\label{upper}Let $(X,d)$ be a quasi-metric space. If  $x$ is an upper bound of a sequence $(x_{n})_{n\in\mathbb{N}}$ in $(X,\preceq_{d})$ and, in addition, $(x_{n})_{n\in\mathbb{N}}$ converges to $x$  with respect to $\tau(d)$, then $x$ is the supremum of $(x_{n})_{n\in\mathbb{N}}$ in $(X,\preceq_{d})$.
\end{lemma}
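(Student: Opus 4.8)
The plan is to verify the two defining requirements of a supremum directly from the definitions of the specialization order $\preceq_{d}$ and of $\tau(d)$-convergence. By hypothesis $x$ is already an upper bound of $(x_{n})_{n\in\mathbb{N}}$, so the only thing left is to show that $x$ is the least such upper bound; that is, I would take an arbitrary $y\in X$ with $x_{n}\preceq_{d}y$ for all $n\in\mathbb{N}$ and prove $x\preceq_{d}y$, which by the definition of supremum given earlier (least element of the set of upper bounds) completes the argument.

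First I would unwind the hypotheses into statements about the quasi-metric $d$. Since $\preceq_{d}$ is the specialization order, the assumption that $y$ is an upper bound means $d(x_{n},y)=0$ for every $n\in\mathbb{N}$, and the goal $x\preceq_{d}y$ is precisely $d(x,y)=0$. Likewise, convergence of $(x_{n})_{n\in\mathbb{N}}$ to $x$ with respect to $\tau(d)$ means, by the description of the basic open balls $B_{d}(x,r)=\{z\in X:d(x,z)<r\}$, that $d(x,x_{n})\to 0$.

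The key step is a single application of the triangle inequality: for each $n\in\mathbb{N}$,
$$d(x,y)\leq d(x,x_{n})+d(x_{n},y)=d(x,x_{n}).$$
Letting $n\to\infty$ (equivalently, taking the infimum over $n$) and using $d(x,x_{n})\to 0$ forces $d(x,y)=0$, i.e.\ $x\preceq_{d}y$. Since $y$ was an arbitrary upper bound, $x$ is the minimum of the set of upper bounds, and hence the supremum of $(x_{n})_{n\in\mathbb{N}}$ in $(X,\preceq_{d})$.

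The argument is short and presents no real obstacle; the only point requiring care is keeping the asymmetry of $d$ straight. Concretely, one must check that $\tau(d)$-convergence controls the quantity $d(x,x_{n})$ rather than $d(x_{n},x)$, because it is exactly $d(x,x_{n})$ that pairs with the hypothesis $d(x_{n},y)=0$ through the triangle inequality; the reverse reading would not yield the conclusion. This is why the statement uses convergence with respect to $\tau(d)$ (and not $\tau(d^{-1})$ or $\tau(d^{s})$), and this is the subtlety I would make explicit in the write-up.
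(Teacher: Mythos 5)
Your proof is correct: $\tau(d)$-convergence indeed controls $d(x,x_n)$, the upper-bound hypothesis on $y$ gives $d(x_n,y)=0$, and the triangle inequality $d(x,y)\leq d(x,x_n)+d(x_n,y)$ yields $d(x,y)=0$, which is exactly the least-upper-bound property. The paper itself omits the proof of this lemma, deferring to \cite{LRV2018}, and your argument is the standard one that establishes it, with the asymmetry handled correctly.
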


Taking into account the above result we have the next one.

\begin{proposition}\label{Cont}Let $(X,d)$ be a $\preceq_{d}$-complete quasi-metric space and let $f:X\rightarrow X$ be a monotone mapping. Assume that there exists $x_{0}\in X$ such that $(f^{n}(x_{0}))_{n\in\mathbb{N}^\star}$ is increasing in $(X,\preceq_{d})$ and that $f$ is continuous from $(X,\tau(d))$ into itself, then $f$ is orbitally $\preceq_{d}$-continuous at $x_{0}$.
\end{proposition}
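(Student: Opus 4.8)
The plan is to unfold the definition of orbital $\preceq_{d}$-continuity at $x_{0}$ and reduce the preservation of suprema to a statement about $\tau(d)$-convergence of the orbit, exploiting the interplay between order-theoretic suprema and topological limits that holds in $\preceq_{d}$-complete quasi-metric spaces. Concretely, assume that $x$ is the supremum of $(f^{n}(x_{0}))_{n\in\mathbb{N}^\star}$ in $(X,\preceq_{d})$; I must show that $f(x)$ is the supremum of $(f^{n+1}(x_{0}))_{n\in\mathbb{N}^\star}$.

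First I would translate the supremum hypothesis into a convergence statement. Since $(f^{n}(x_{0}))_{n\in\mathbb{N}^\star}$ is increasing by assumption and $(X,d)$ is $\preceq_{d}$-complete, this sequence converges with respect to $\tau(d^{s})$ to some $z\in X$, and Lemma \ref{least} guarantees that $z$ is the supremum of $(f^{n}(x_{0}))_{n\in\mathbb{N}^\star}$ in $(X,\preceq_{d})$. As suprema are unique, $z=x$, so $(f^{n}(x_{0}))_{n\in\mathbb{N}^\star}$ converges to $x$ with respect to $\tau(d^{s})$ and, hence, also with respect to the coarser topology $\tau(d)$, because $\tau(d)\subseteq\tau(d^{s})$.

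Next I would push this convergence through $f$. Because $\tau(d)$ admits the countable local base $\{B_{d}(x,\tfrac{1}{n}):n\in\mathbb{N}\}$ at each point, $\tau(d)$ is first countable and the $\tau(d)$-continuity of $f$ yields sequential continuity; thus $(f^{n+1}(x_{0}))_{n\in\mathbb{N}^\star}=(f(f^{n}(x_{0})))_{n\in\mathbb{N}^\star}$ converges to $f(x)$ with respect to $\tau(d)$. It then remains to identify $f(x)$ as the supremum of the shifted orbit. Since $x$ is an upper bound of $(f^{n}(x_{0}))_{n\in\mathbb{N}^\star}$, monotonicity of $f$ gives $f^{n+1}(x_{0})=f(f^{n}(x_{0}))\preceq_{d} f(x)$ for all $n\in\mathbb{N}^\star$, so $f(x)$ is an upper bound of $(f^{n+1}(x_{0}))_{n\in\mathbb{N}^\star}$. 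Applying Lemma \ref{upper} to the sequence $(f^{n+1}(x_{0}))_{n\in\mathbb{N}^\star}$, its upper bound $f(x)$, and its $\tau(d)$-limit $f(x)$, I conclude that $f(x)$ is the supremum of $(f^{n+1}(x_{0}))_{n\in\mathbb{N}^\star}$, which is exactly the orbital $\preceq_{d}$-continuity of $f$ at $x_{0}$.

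The main obstacle is conceptual rather than computational: one must bridge the order notion of supremum with two different topological notions of limit. The delicate points are that $\preceq_{d}$-completeness and the continuity hypothesis live at different levels -- the former guarantees $\tau(d^{s})$-convergence while continuity of $f$ is assumed only for $\tau(d)$ -- so one must carefully pass between $\tau(d^{s})$ and $\tau(d)$; and that continuity is given topologically, so sequential continuity has to be justified via the first countability of $\tau(d)$. Once these are handled, Lemmas \ref{least} and \ref{upper} do the remaining work of converting back and forth between suprema and limits.
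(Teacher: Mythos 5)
Your proof is correct and follows essentially the same route as the paper's: use $\preceq_{d}$-completeness together with Lemma \ref{least} to identify the supremum $x$ with the $\tau(d^{s})$-limit of the orbit, push the $\tau(d)$-convergence through $f$ by continuity, obtain that $f(x)$ is an upper bound of the shifted orbit by monotonicity, and conclude via Lemma \ref{upper}. The only differences are that you make explicit two points the paper leaves implicit, namely the uniqueness of suprema to identify the limit with the hypothesized supremum and the first-countability argument justifying sequential continuity.
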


\begin{proof}Let $x_0\in X$ such that the sequence $(f^{n}(x_{0}))_{n\in\mathbb{N}^\star}$ is increasing in $(X,\preceq_{d})$.  Since the quasi-metric space $(X,d)$ is $\preceq_{d}$-complete there exists $x\in X$ such that the sequence $(f^{n}(x_{0}))_{n\in\mathbb{N}}$ converges to $x$ with respect to $\tau(d^{s})$. By Lemma \ref{least}, $x$ is the supremum of $(f^{n}(x_{0}))_{n\in\mathbb{N}}$. Moreover, the continuity of $f$ gives that $(f^{n+1}(x_{0}))_{n\in\mathbb{N}^\star}$ converges to $f(x)$ with respect to $\tau(d)$ and the monotony of $f$ provides that $f(x)$ is an upper bound of $(f^{n}(x_{0}))_{n\in\mathbb{N}^\star}$ in $(X,\preceq_{d})$. By Lemma \ref{upper} we have that $f(x)$ is the supremum of $(f^{n+1}(x_0))_{n\in\mathbb{N}^\star}$. Therefore $f$ is orbitally $\preceq_{d}$-continuous at $x_{0}$. \end{proof}

From the preceding result we can derive the following one which was proved in \cite{LRV2018}.

\begin{corollary}\label{Cont3}Let $(X,d)$ be a $\preceq_{d}$-complete quasi-metric space and let $f:X\rightarrow X$ be a mapping. If $f$ is continuous from $(X,\tau(d))$ into itself, then $f$ is $\preceq_{d}$-continuous.
\end{corollary}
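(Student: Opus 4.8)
The goal is to verify the definition of $\preceq_d$-continuity directly: given any increasing sequence $(x_n)_{n\in\mathbb{N}}$ in $(X,\preceq_d)$ whose supremum exists and equals $x$, I must show that $f(x)$ is the supremum of $(f(x_n))_{n\in\mathbb{N}^\star}$. The plan is to mirror the argument of Proposition \ref{Cont}, the only genuinely new ingredient being that here monotony of $f$ is not among the hypotheses and must instead be extracted from the topological continuity assumption.

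First I would prove that every mapping which is continuous from $(X,\tau(d))$ into itself is monotone with respect to the specialization order $\preceq_d$. Concretely, suppose $x\preceq_d y$, that is $d(x,y)=0$, and argue by contradiction: if $d(f(x),f(y))=\varepsilon>0$, then $B_d(f(x),\varepsilon)$ is an open neighborhood of $f(x)$, so by continuity of $f$ at $x$ there is $\delta>0$ with $f(B_d(x,\delta))\subseteq B_d(f(x),\varepsilon)$; since $d(x,y)=0<\delta$ we get $y\in B_d(x,\delta)$ and hence $d(f(x),f(y))<\varepsilon$, a contradiction. Thus $d(f(x),f(y))=0$, i.e.\ $f(x)\preceq_d f(y)$. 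I expect this to be the conceptual crux, since it is precisely what allows us to dispense with the monotony hypothesis present in Proposition \ref{Cont} and to reduce the present statement to the machinery already developed there.

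With monotony in hand, the remainder runs as in Proposition \ref{Cont}. Let $(x_n)_{n\in\mathbb{N}}$ be increasing with supremum $x$. By $\preceq_d$-completeness there is $x'\in X$ to which $(x_n)_{n\in\mathbb{N}}$ converges with respect to $\tau(d^s)$, and Lemma \ref{least} identifies $x'$ as the supremum of $(x_n)_{n\in\mathbb{N}}$; antisymmetry of $\preceq_d$ then forces $x'=x$. Since $\tau(d)\subseteq\tau(d^s)$, the sequence also converges to $x$ with respect to $\tau(d)$, whence the continuity of $f$ yields that $(f(x_n))_{n\in\mathbb{N}}$ converges to $f(x)$ with respect to $\tau(d)$. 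Monotony gives $f(x_n)\preceq_d f(x)$ for all $n$, so $f(x)$ is an upper bound of $(f(x_n))_{n\in\mathbb{N}^\star}$ in $(X,\preceq_d)$. Finally, Lemma \ref{upper} upgrades this upper bound to a supremum, establishing that $f(x)$ is the supremum of $(f(x_n))_{n\in\mathbb{N}^\star}$ and therefore that $f$ is $\preceq_d$-continuous, as desired.
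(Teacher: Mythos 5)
Your proof is correct. The paper itself offers no argument for this corollary --- it merely asserts that it can be derived from Proposition \ref{Cont} and cites the literature --- and your write-up is precisely the natural way to make that derivation go through. The key point you identify is the right one: Proposition \ref{Cont} assumes monotony, the corollary does not, so one must first show that continuity with respect to $\tau(d)$ forces monotony for the specialization order $\preceq_{d}$; your $\varepsilon$--$\delta$ argument for this is correct (it is the standard fact that continuous maps preserve the specialization preorder, since $x\preceq_{d}y$ exactly when $y$ lies in every $\tau(d)$-neighbourhood of $x$). The remainder --- running the Proposition \ref{Cont} argument for an arbitrary increasing sequence rather than an orbit, using $\preceq_{d}$-completeness, Lemma \ref{least}, antisymmetry to identify the limit with the given supremum, and Lemma \ref{upper} to upgrade the upper bound $f(x)$ to the supremum of $(f(x_{n}))_{n\in\mathbb{N}}$ --- is exactly the intended machinery, and you are right that one cannot simply quote Proposition \ref{Cont} verbatim, since its conclusion concerns only orbital continuity at a point.
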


In addition to the preceding results we have the next one which will be crucial in our subsequent discussion.

\begin{proposition}\label{Cont2}Let $(X,d)$ be a quasi-metric space and let $f:X\rightarrow X$ be a mapping. Assume that there exists $c\in [0,1[$ such that $$d(f(x),f(y))\leq cd(x,y)$$ for all $x,y\in X$. Then the following assertions hold:
\begin{enumerate}
\item $f$ is monotone $(X,\preceq_d)$ and continuous from $(X,\tau(d))$ into itself.

\item If there exist $v,w\in X$ with $v\preceq_{d}f(v)$ and $f(w)\preceq_{d} w$, then $v\preceq w$. 
\end{enumerate}
\end{proposition}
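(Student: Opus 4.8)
The plan is to work directly from the contraction hypothesis $d(f(x),f(y))\leq c\,d(x,y)$, the definition of the specialization order $x\preceq_d y \Leftrightarrow d(x,y)=0$, and the triangle inequality for $d$. Everything reduces to short estimates, with no limiting arguments involved, so the proof should be purely computational.

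For the monotony assertion in (1), I would begin by assuming $x\preceq_d y$, which by definition means $d(x,y)=0$. Substituting this into the contraction inequality gives $d(f(x),f(y))\leq c\,d(x,y)=0$, hence $d(f(x),f(y))=0$, that is, $f(x)\preceq_d f(y)$. For the topological continuity of $f$ with respect to $\tau(d)$, I recall that $\tau(d)$ has as a base the open balls $B_d(z,r)=\{u\in X:d(z,u)<r\}$, so it suffices to verify continuity at an arbitrary $x\in X$ against such balls. Fixing $\varepsilon>0$, I would choose $\delta=\varepsilon$ when $c=0$ and $\delta=\varepsilon/c$ when $c\in\,]0,1[$; then for every $y\in B_d(x,\delta)$ the contraction inequality yields $d(f(x),f(y))\leq c\,d(x,y)<c\delta\leq\varepsilon$, so that $f(B_d(x,\delta))\subseteq B_d(f(x),\varepsilon)$, which is exactly continuity at $x$.

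For (2), I would assume $v\preceq_d f(v)$ and $f(w)\preceq_d w$, i.e. $d(v,f(v))=0$ and $d(f(w),w)=0$. The key step is to apply the triangle inequality twice, inserting the images $f(v)$ and $f(w)$ as intermediate points, in order to bound $d(v,w)$ by its own contracted value:
$$d(v,w)\leq d(v,f(v))+d(f(v),f(w))+d(f(w),w)=d(f(v),f(w))\leq c\,d(v,w).$$
From $d(v,w)\leq c\,d(v,w)$ with $c<1$ it follows that $d(v,w)=0$, since $d(v,w)>0$ would force the strict inequality $c\,d(v,w)<d(v,w)$, a contradiction. Hence $v\preceq_d w$, as claimed.

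I do not anticipate any genuine obstacle, since each part is a short estimate; the only point deserving mild care is the continuity bound when $c=0$, where $\delta$ cannot be taken as $\varepsilon/c$ but any $\delta$ works because $f$ collapses all $d$-distances to $0$. The conceptual heart of the argument is the two-fold triangle estimate in (2), which converts the hypotheses $v\preceq_d f(v)$ and $f(w)\preceq_d w$ into the self-improving inequality $d(v,w)\leq c\,d(v,w)$ and thereby forces $d(v,w)=0$.
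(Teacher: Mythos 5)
Your proof is correct and follows essentially the same route as the paper: monotony from $d(x,y)=0\Rightarrow d(f(x),f(y))\leq c\cdot 0=0$, and part (2) via the same two-fold triangle estimate $d(v,w)\leq d(v,f(v))+d(f(v),f(w))+d(f(w),w)\leq c\,d(v,w)$ forcing $d(v,w)=0$. The only cosmetic difference is that you establish continuity by an $\varepsilon$--$\delta$ argument on open balls while the paper uses sequences; in a quasi-metric (hence first countable) space these are interchangeable.
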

\begin{proof}$1.$ Let $x,y\in X$ with $x\preceq_{d} y$. Then $d(x,y)=0$. Since $d(f(x),f(y))\leq cd(x,y)$ we deduce that $d(f(x),f(y))=0$. Thus $f(x)\preceq_{d}f(y)$ and $f$ is monotone. Consider $x\in X$ and a sequence $(x_{n})_{n\in\mathbb{N}^\star}$ which converges to $x$ with respect to $\tau(d)$. Then $(f(x_{n}))_{n\in\mathbb{N}^\star}$ converges to $f(x)$ with respect to $\tau(d)$, since $d(f(x),f(x_{n}))\leq cd(x,x_{n})$ for all $n\in\mathbb{N}$. It follows that $f$ is continuous from $(X,\tau(d))$ into itself.

$2.$ Suppose that there exist $v,w\in X$ with $v\preceq_{d}f(v)$ and $f(w)\preceq_{d} w$. Then $d(v,f(v))=d(f(w),w)=0$. Hence we have that $$d(v,w)\leq d(v,f(v))+d(f(v),f(w))+d(f(w),w)\leq cd(v,w).$$ It follows that $d(v,w)=0$ and, thus, that $v\preceq_{d} w$, because otherwise we deduce that $1\leq c$ which is a contradiction. \end{proof}

By virtue of what is set out in the previous results, we are able to show that Theorems \ref{SchT} and \ref{SchT2} comes from Theorem \ref{CompTech} as it was announced. Indeed, the sets $\mathcal{C}_{b,c}$ and $\mathcal{C}_{c_{1}\ldots, c_{k}}$ were showed to be closed subsets of $\mathcal{C}$ with respect to $\tau(d^{s}_{\mathcal{C}})$ in \cite{RPV} and \cite{RomVal2013}, respectively. So the quasi-metric spaces $(\mathcal{C}_{b,c},d_{\mathcal{C}})$ and $(\mathcal{C}_{c_{1}\ldots, c_{k}},d_{\mathcal{C}})$ are left Smyth complete and, hence, $\preceq_{d_{\mathcal{C}}}$-complete. By Proposition \ref{Cont2} we have that the mappings $\Phi_{T}$, associated to (\ref{genreclinear}) and to (\ref{genreclinear2}), are monotone and continuous, since they are contractive, i.e., they satisfy that $$d_{\mathcal{C}}(\Phi_{T}(f),\Phi_{T}(g))\leq \frac{1}{2} d_{\mathcal{C}}(f,g)$$ for all $f,g\in \mathcal{C}_{b,c}$ and $$d_{\mathcal{C}}(\Phi_{T}(f),\Phi_{T}(g))\leq \left(\max_{1\leq i\leq k}\frac{1}{a_{i}}\right) \left( \frac{2^{k}-1}{2^{k}}\right) d_{\mathcal{C}}(f,g)$$ for all $f,g\in \mathcal{C}_{c_{1},\ldots,c_{k}}$.

Now, if there exists $g\in \mathcal{C}_{b,c}$ ($g\in \mathcal{C}_{c_{1}\ldots, c_{k}}$) such that $g\preceq_{d_{\mathcal{C}}} \Phi_{T}(g)$, then, by Proposition \ref{Cont},  $\Phi_{T}(g)$ is orbitally $\preceq_{d_{\mathcal{C}}}$-continuous at $g$. Moreover, if there exists $f\in \mathcal{C}_{b,c}$ ($f\in \mathcal{C}_{c_{1}\ldots, c_{k}}$) such that $\Phi_{T}(f)\preceq_{d_{\mathcal{C}}} f$ then Proposition \ref{Cont2} guarantees that $g\preceq_{d_{\mathcal{C}}} f$. Therefore Theorem \ref{CompTech} (or Corollary \ref{CompTech2}) provides that there exists $f^{\star}\in \mathcal{C}_{b,c}$ ($f^{\star}\in \mathcal{C}_{c_{1}\ldots, c_{k}}$) such that $f^{\star}\in \Omega(g)\cap \mathcal{O}(f)$.

Next we show that Theorem \ref{R} can be derived form Theorem \ref{CompTech} as promised. First, according to \cite{RomVal2013}, the quasi-metric space $(\mathcal{C}_{c_{1},\ldots,c_{k-1}}, d_{\mathcal{C}})$ is left Smyth complete and, hence, $\preceq_{d_{\mathcal{C}}}$-complete. So, by Proposition \ref{usefullemma}, we have that the partially ordered set $(X,\preceq_d)$ is chain complete.

It is clear that the mapping $\Phi_{T}$, given by (\ref{RecProb}), is monotone with respect to $\preceq_{d_{\mathcal{C}}}$. Moreover, $g_{h}\preceq_{d_{\mathcal{C}}}\Phi_{T}(g_{h})$, where $g_{h}\in \mathcal{C}_{c_{1},\ldots,c_{k}}$ with $g_{h}(n)=h(n)$ for all $n\geq k$ and $g_{h}(n)=c_{n}$ for all $1\leq n< k$. In fact, note that $g_{h}\preceq_{d_{\mathcal{C}}} \Phi(f)$ for all $f\in \mathcal{C}_{c_{1},\ldots,c_{k-1}}$.

Furthermore, $\Phi_{T}$ is orbitally $\preceq_{d_{\mathcal{C}}}$-continuous at $g_{h}$. Indeed we have that the sequence $(\Phi^{m}_{T}(g_{h}))_{m\in\mathbb{N}^\star}$ is increasing in $(\mathcal{C}_{c_{1},\ldots,c_{k-1}},\preceq_{d_{\mathcal{C}}})$ and $(\mathcal{C}_{c_{1},\ldots,c_{k}},\preceq_{d_{\mathcal{C}}})$ is chain complete and, thus, that there exists $f^{\star}\in \mathcal{C}_{c_{1},\ldots,c_{k-1}}$ such that $f^{\star}$ is the supremum of $(\Phi^{m}_{T}(g_{h}))_{m\in\mathbb{N}^\star}$ in $(\mathcal{C}_{c_{1},\ldots,c_{k-1}},\preceq_{d_{\mathcal{C}}})$. On the one hand, since $\Phi_T$ is monotone we have that $\Phi_T(f^\star)$ is an upper bound of the sequence $(\Phi^{m+1}_{T}(g_{h}))_{m\in\mathbb{N}^\star}.$ On the other hand, fixed $n\in\mathbb{N}$ such that $n>k$ we have that, for every $\varepsilon$, there exists $m_{\varepsilon}$ such that

$$f^{\star}(i)<\varepsilon+\Phi^{m_{\varepsilon}}_{T}(g_{h})(i)$$ for all $k\leq i\leq n-1$. Thus we obtain that 

$$ \begin{array}{ll}
\Phi_{T}(f^{\star})(n)<\sum_{i=k}^{n-1}v_{i}(n)\varepsilon+h(n)+\Phi^{m_{\varepsilon}}_{T}(g_{h})(n) = &\\
\\
\varepsilon \sum_{i=1}^{k-1}v_{i}(n)+\Phi^{m_{\varepsilon}+1}_{T}(g_{h})(n)\leq K\varepsilon+f^{\star}(n).&\\
\end{array}
$$ It follows that $\Phi_{T}(f^{\star})\preceq_{d_{\mathcal{C}}} f^{\star}$ and so $\Phi_{T}$ is orbitally $\preceq_{d_{\mathcal{C}}}$-continuous at $g_{h}$.

Now, if there exists $f\in \mathcal{C}_{c_{1},\ldots,c_{k-1}}$ such that $\Phi_T(f)\preceq_{d_{\mathcal{C}}} f$, then $g_{h}\preceq_{d_{\mathcal{C}}} \Phi_T(f)\preceq_{d_{\mathcal{C}}} f$. Whence we obtain, by Theorem \ref{CompTech} (or Corollary \ref{CompTech2}), that $f^{\star}\in \Omega(g_{h})\cap \mathcal{O}(f)$.

It is worthy to observe that Proposition \ref{useful}, the main result in which Theorem \ref{R} is based on, can be derived from  Lemma \ref{least} and Propositions \ref{completeness} and \ref{usefullemma}.

We end the paper, noting that Theorem \ref{CompTech} (and Corollary \ref{CompTech2}) introduces a fixed point technique for asymptotic complexity analysis of algorithms which does not assume requirements over all elements in a subset $\mathcal{R}$ of $\mathcal{C}$. It follows that we can reduce the set of elements over which we need to check those conditions that allow discuss the asymptotic complexity of an algorithm whose running time satisfies a recurrence equation. Hence the new technique improves those given in \cite{RPV,RomVal2013,Schellekens}. Besides, the aforementioned technique captures the essence of that given in Theorem \ref{R} and, in addition, it allows to state upper and lower asymptotic bounds for the running time computing of algorithms. So, in this sense, it improves the technique introduced in Theorem \ref{R}. Besides, the new fixed point method preserve the original Scott's ideas providing a common framework for Denotational Semantics and Asymptotic Complexity of algorithms.
 

\section{Future Work}
 { It must be stressed that in Scott's approach the $\preceq$-continuity of a mapping matches up with the notion of continuity with respect to the so-called Scott topology (see, for instance, \cite{JGL2013}). Clearly in our new context the $\preceq$-continuity has been replaced by the orbital $\preceq$-continuity. So it seems natural to wonder whether such a notion can be interpreted as a kind of continuity with respect to any topology. Thus the authors propose as future work to analyze if that topology exists and, if so, characterize it. }
 
 J.J. Mi\~nana and O. Valero acknowledge financial support from UE funds and Programa Operatiu FEDER 2014-2020 de les Illes Balears, by project PROCOE/4/2017 (Direcci\'{o} General d'Innovaci\'{o} i Recerca, Govern de les Illes Balears) and by project ROBINS. The latter has received research funding from the EU H2020 framework under GA 779776. This publication reflects only the authors views and the European Union is not liable for any use that may be made of the information contained therein. A. Estevan acknowledges financial support from the Ministry of Economy and Competitiveness of Spain under grants  MTM2015-63608-P (MINECO/FEDER) and ECO2015-65031.

\end{document}